\definecolor{mygray}{gray}{.9}
\newcolumntype{C}[1]{>{\PreserveBackslash\centering}p{#1}}
\newcolumntype{R}[1]{>{\PreserveBackslash\raggedleft}p{#1}}
\newcolumntype{L}[1]{>{\PreserveBackslash\raggedright}p{#1}}
\newcommand{\AlgoResetCount}{\renewcommand{\@ResetCounterIfNeeded}{\setcounter{AlgoLine}{0}}}
\newcommand{\AlgoNoResetCount}{\renewcommand{\@ResetCounterIfNeeded}{}}
\newcounter{AlgoSavedLineCount}
\theoremstyle{definition}
\newtheorem{theorem}{Theorem}
\newtheorem{lemma}{Lemma}
\def\BibTeX{{\rm B\kern-.05em{\sc i\kern-.025em b}\kern-.08em
    T\kern-.1667em\lower.7ex\hbox{E}\kern-.125emX}}
\begin{document}

\title{Energy-Efficient and Physical Layer Secure Computation Offloading in Blockchain-Empowered Internet of Things}

\author{\IEEEauthorblockN{Yiliang Liu, ~\IEEEmembership{Member,~IEEE}, Zhou Su,~\IEEEmembership{Senior Member,~IEEE}, and Yuntao Wang, ~\IEEEmembership{Student Member,~IEEE}}

\thanks{Y. Liu (email: {\tt liuyiliang@xjtu.edu.cn}), Z. Su (email: {\tt zhousu@ieee.org}), and Y. Wang (email: {\tt yuntao.wang@stu.xjtu.edu.cn}) are with the School of Cyber Science and Engineering, Xi'an Jiaotong University, Xi'an 710049, Shaanxi, China. (Corresponding author: Z. Su.)
}
\thanks{This work was supported in part by National Natural Science Foundation of China (Nos. 62101429, U20A20175, and U1808207).}

}

\maketitle

\begin{abstract}
This paper investigates computation offloading in blockchain-empowered Internet of Things (IoT), where the task data uploading link from sensors to a base station (BS) is protected by intelligent reflecting surface (IRS)-assisted physical layer security (PLS). After receiving task data, the BS allocates computational resources provided by mobile edge computing (MEC) servers to help sensors perform tasks. Existing blockchain-based computation offloading schemes usually focus on network performance improvements, such as energy consumption minimization or latency minimization, and neglect the Gas fee for computation offloading, resulting in the dissatisfaction of high Gas providers. Also, the secrecy rate during the data uploading process can not be measured by a steady value because of the time-varying characteristics of IRS-based wireless channels, thereby computational resources allocation with a secrecy rate measured before data uploading is inappropriate. In this paper, we design a Gas-oriented computation offloading scheme that guarantees a low degree of dissatisfaction of sensors, while reducing energy consumption. Also, we deduce the ergodic secrecy rate of IRS-assisted PLS transmission that can represent the global secrecy performance to allocate computational resources. The simulations show that the proposed scheme has lower energy consumption compared to existing schemes, and ensures that the node paying higher Gas gets stronger computational resources.
\end{abstract}

\begin{IEEEkeywords}
Internet of Things, blockchain, physical layer security, intelligent reflecting surface, Gas-oriented.
\end{IEEEkeywords}

\section{Introduction}
With the rapid development of wireless communications and sensor manufacturing technologies, the narrowband-enabled Internet of Things (IoT) technologies, released by 3GPP, have gained increasing attention in industrial, medical, business fields, etc \cite{Gozalvez2016}. Following that, the security issues of heterogeneous devices and infrastructures have raised many concerns. Building on their characteristics of decentralization, lightweight, and multimoding, the blockchain and physical layer security are regarded as enabling endogenous security technologies to simultaneously address trust and confidentiality issues of IoT networks \cite{Wang2021a,Liu2021,Su2020}.

The typical application of the blockchain-empowered IoT is the computation offloading of the tasks of sensor nodes. Smart contracts perform the automated and trust-free offloading process between sensor nodes and mobile edge computing (MEC) servers on top of the blockchain to ensure that transactions cannot be denied and malicious behaviors will also be traced \cite{Wu2021, Xie2021, Dai2021}. However, existing IoT computation offloading schemes have the following two challenges. Firstly, traditional schemes usually focus on the optimization of computation or communication performance, such as computing latency minimization, energy consumption minimization, or network throughput maximization \cite{Mu2020,Han2021iot,Liu2021,Dai2018}. The Gas\footnote{Gas refers to the cost necessary to perform transactions on Ethereum, where Gas price has the unit ETH or Gwei. The fee of a transaction can be calculated by the amount of Gas $\times$ Gas price. Gas price has a transformation to legal tenders. For instance, a transaction needs 100 Gwei with $6.5\times 10^{-7}$ dollar/Gwei.} factor only affects the block generation speed and is not considered in computational resource allocation, leading to the dissatisfaction of sensors as better computational resources are not allocated to those sensors under the performance priority strategy even if they pay high Gas. Secondly, the coverage of the central trusted authority (TA) is limited in dynamic and pervasive IoT networks. Cryptographic technologies for the establishment and update of secure communications rely on TA \cite{Jiang2020,Li2021}. Hence, the security of links from the sensors to the MEC servers cannot be protected effectively. In addition, due to the cost constraint of sensor devices, traditional PLS schemes, such as multiple-antenna beamforming or artificial noise technologies \cite{Liu2020}, are not suitable for IoT networks.
 
To address these problems mentioned above, we present an intelligent reflecting surface (IRS)-assisted secure computation offloading scheme to guarantee that the node paying higher Gas has more opportunities to get a stronger computational resource, while reducing the energy consumption of the computation offloading process. The main contributions are summarized as follows.
\begin{itemize}
\item We formulate a PLS-assisted computation offloading model in blockchain-empowered IoT, where phase shift matrix adjustment of IRS devices, multiple antenna technologies at BS, and the allocation of computational resources of MEC servers with dissatisfaction thresholds of sensors are jointly considered to reduce the energy consumption of the computation offloading process.
\item We propose a phase shift matrix adjustment scheme via manifold optimization to improve the secrecy rate. Then, we deduce the expression of the ergodic secrecy rate of this adjustment scheme, which provides a global metric of secrecy performance for computational resource allocation. 
\item We transform the original computational resource allocation problem into a 2-dimensional matching problem of bipartite graphs, which is solved by the Kuhn-Munkres (KM) algorithm. Especially, the edges of the bipartite graph are restricted with dissatisfaction degrees of sensors, so the optimal matching achieves the goal that the sensors with higher Gas payment are prioritized with better computational resources. After computational resource allocation, we propose an adaptive PLS coding method to improve the effective secrecy rate during the data uploading process.
\end{itemize}

The remainder of the paper is organized as follows. Section \ref{related} surveys the related works. Section \ref{model} describes the system model and problem formulation. The IRS-assisted PLS and computation offloading schemes are proposed in Section \ref{proposed1} and \ref{proposed2}, respectively. We show simulation results in Section \ref{simulations}, and conclude this paper in Section \ref{conclusions}.

\textsl{Notations:} Bold uppercase letters, such as $\mathbf{A}$, denote matrices, and bold lowercase letters, such as $\mathbf{a}$, denote column vectors. $\mathbf{A}^{\rm{H}}$, $\mathbf{A}^{\rm{T}}$, and $\mathbf{A}^{\dagger}$ represent the conjugate transpose, transpose, and conjugate transformation of $\mathbf{A}$, respectively. $\mathbf{I}_a$ is an identity matrix with its rank $a$. $\mathcal{CN}(\mu,\sigma^2)$ is a complex normal (Gaussian) distribution with mean $\mu$ and variance $\sigma^2$. $|\mathbf{x}|$ is the Euclidean norm of $\mathbf{x}$. $\text{diag}(\mathbf{X})$ is the diagonal matrix of $\mathbf{X}$. $\mathbb{E}(\cdot)$ is the expectation operation. $\text{arg}(x)$ is the angle of complex variable $x$. $\circ$ is the Hadamard product. $\mathfrak{R}(x)$ means the real part of complex variable $x$.

\section{Related Work}\label{related}

\subsection{Blockchain-empowered IoT}
The blockchain technology is now integrated into the IoT networks to ensure trust and traceability of node behaviors. Despite the benefit that blockchain brings to traditional IoT applications, there are still many challenges in its actual implementation, such as blockchain data storage, power consumption, and task offloading. To cope with these issues, many efforts to adopt blockchain in IoT applications have been conducted in recent years.

\subsubsection{Data Storage and Prototype}
Yu \emph{et al.} presented a multiple-layer storage architecture for large-scale IoT networks to meet the storage requirement of massive transaction data of blockchain \cite{Yu2021}. Furthermore, Pyoung \emph{et al.} proposed finite lifetime blocks to reduce storage costs in IoT, where outdated transactions and blocks can be safely removed from the blockchain \cite{Pyoung2020}. Pan \emph{et al.} designed the prototype of an edge IoT resource management system based on blockchain and smart contracts, where all the IoT activities and transactions are recorded on the blockchain for security purposes \cite{Pan2019}.

\subsubsection{Computation Offloading}
In this case,  Xu \emph{et al.} safeguarded data integrity during real-time data processing in blockchain-empowered IoT \cite{Xublock2020}, while considering the imbalances of computational resource distribution and workload. Due to the computation and storage constraints of mobile IoT devices in blockchain-enabled applications, Liu \emph{et al.} offloaded compute-intensive proof-of-work (PoW) mining missions to nearby edge servers \cite{Liublock2018}. Chen \emph{et al.} further investigated joint computation offloading of mining tasks and data processing tasks for industrial IoT (IIoT) devices and used Nash equilibrium of the game theory to minimize their economic costs \cite{Chenblock2019}. Feng and Nguyen \emph{et al.} designed a novel deep reinforcement learning algorithm to improve the communication and computation performance during the entire computation offloading process \cite{Fengblock2020,Nguyen2021}. Considering the cost of resources and income of the service provisioning, Fan \emph{et al.} used the Stackelberg game where computation resource providers are set as the leader, and IoT users are followers~\cite{Fan2021}. The objective of Fan's scheme is to optimize the revenues of computation resources. Seng \emph{et al.} proposed a matching algorithm between IoT users and edge servers by considering task execution time and energy consumption, and the matching algorithm is implemented by smart contracts without relying on central nodes \cite{Seng2021}. Our work focuses on the confidentiality of the task data uploading process via PLS schemes and the Gas effect on the computational resource allocation, which are not discussed in previous researches. 

\subsection{IRS-assisted PLS Technologies}
IRS devices are usually applied to satellite communications and radars but are not widely used in the mobile communication field because of cost and fabrication constraints. Thanks to the development of new material technologies, IRS has gradually been used in mobile communications and networks to improve signal quality and security performance. In the area of IRS-assisted PLS, many investigations are conducted to improve the secrecy rate, such as the semidefinite relaxation (SDR)-based and majorization minimization (MM)-based phase shift matrix optimization methods \cite{Sai2021,Zheng2021}. However, there are very few researches on expressions of security performance metrics. Yang and Trigui \emph{et al.} considered the IRS-assisted single-input single-output single-antenna eavesdropper (SISOSE) case, and gave its expressions of ergodic secrecy rate and secrecy outage probability \cite{Yang2020,Trigui2021}. For multiple-input single-output multiple-antenna eavesdropper (MISOME) scenarios, the Monte Carlo method is used to obtain the ergodic secrecy rate with a lot of computational overhead \cite{FengIRS2020}. Our paper considers the uplink communications of IoT with a passive multiple-antenna eavesdropper, which obeys IRS-assisted single-input multiple-output multiple-antenna eavesdropper (SIMOME), and its ergodic secrecy rate expression is not investigated in existing works.

\subsection{Discussion}

The Gas factor is not considered in existing investigations of computation offloading in blockchain-empowered IoT networks. Besides, the absence of expressions of ergodic secrecy rate and effective secrecy rate of IRS-assisted SIMOME results in that computational resource allocation and data uploading lack optimization objectives and parameters. In this work, we first design a manifold-based phase shift matrix adjustment scheme for IRS-assisted PLS. Then, we provide its expressions of ergodic secrecy rate and effective secrecy rate, followed by a Gas-oriented computational resource allocation scheme that can reduce the energy consumption, while ensuring the fairness among sensors.

\begin{figure}[!htp]
\centering
\includegraphics[width=0.97\linewidth]{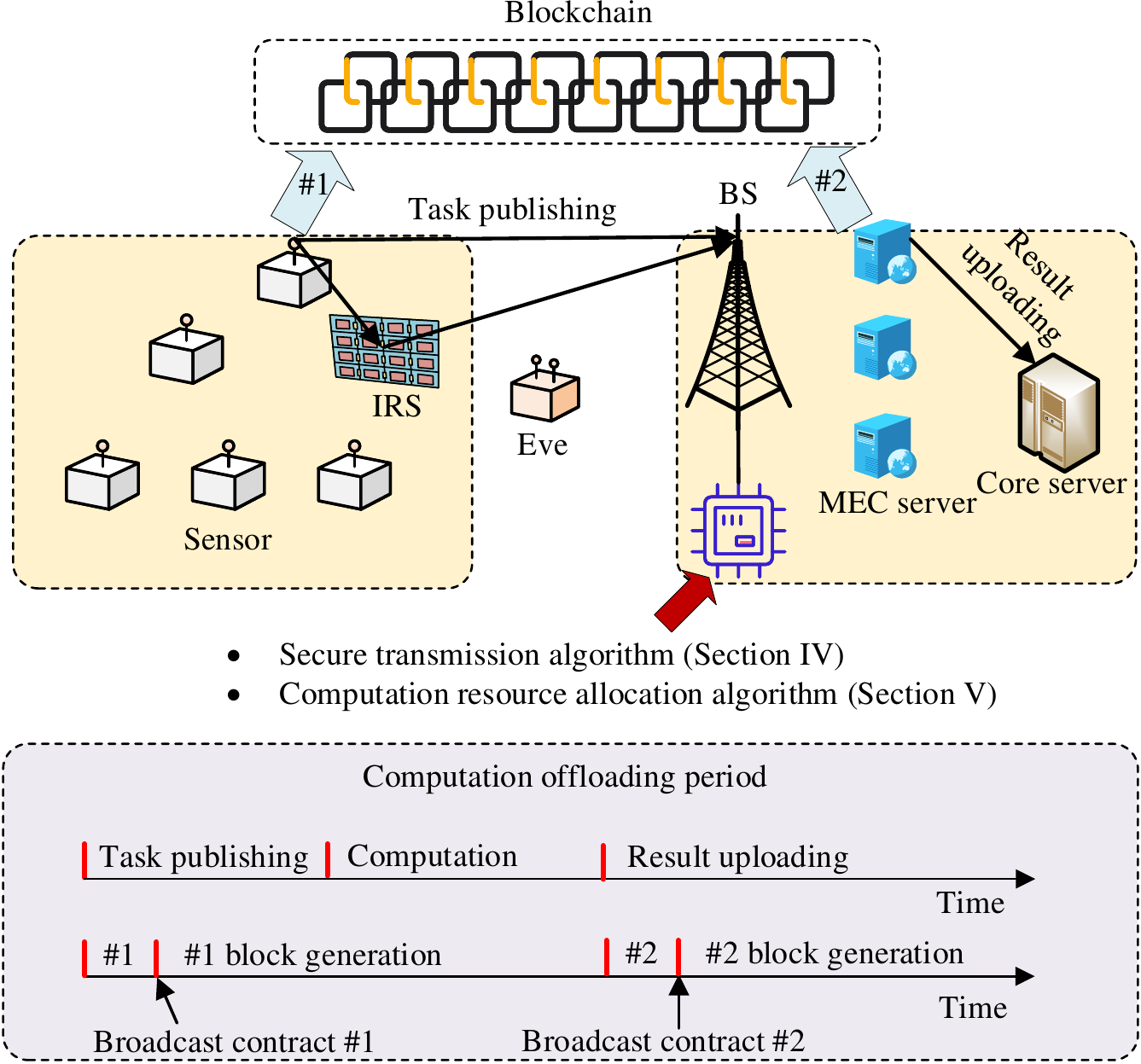}
\caption{Secure computation offloading in IoT networks, where the sensor $U_i$ sends computation tasks to the BS via IRS and instantaneously broadcasts the task publishing contracts (denoted by $\#1$). After receiving tasks, the BS allocates computational resources of MEC server $M_k$ to complete the tasks, then $M_k$ broadcasts the result uploading contracts (denoted by $\#2$). Both task publishing and result uploading should be recorded in blocks via the $\#1$ block generation and $\#2$ block generation, respectively. The uplink links from sensors to the BS is protected by an IRS-assisted PLS scheme. }\label{model_figure}
\end{figure}

\section{System Model and Problem Formulation}\label{model}
This article considers a blockchain-empowered IoT network, as shown in Fig. \ref{model_figure}, which includes $N_I$ sensor nodes, denoted by $\{U_1,U_2,...,U_{N_I}\}$. Each sensor is equipped with one antenna. A set of $N_K$ MEC servers, denoted by $\{M_1,M_2,...,M_{N_K}\}$, are associated with an $N_b$-antenna BS via wired links, and $N_K \geq N_I$. Due to the transmission power constraints and the large-scale fading effect of sensor communications, an IRS device is deployed to enhance the security of the uplink channel from sensors to the BS.

All sensors and MEC servers are registered in Ethereum and follow the rules of Ethereum. Ethereum is a decentralized, open-source, and public blockchain platform supporting smart contracts, including three functional layers \cite{Wu2021arch} \cite{Wu2021netw}.
\begin{enumerate}
\item Underlaying layer: It mainly involves the transmission and storage mechanisms of transaction and block data, as well as the blockchain-based cryptographic algorithms, such as hash and certificates. 
\item Core layer: It includes incentive, consensus, and blockchain topology mechanisms. The incentive mechanism is based on Ether, which introduces economic incentives to make the nodes give their efforts to verify data in Ethereum. The consensus mechanism achieves information synchronization among untrusted parties in decentralized systems, such as Proof of Work (PoW), Proof of Stake (PoS). The blockchain topology mechanism involves transactions, mining, block verification, Gas, etc., where Gas is the cost necessary to perform transactions on Ethereum.
\item Application layer: It involves script codes, smart contracts, and other programmable codes that can be used to enable diverse Ethereum transactions.
\end{enumerate}

\subsection{Ethereum-Assisted Computation Offloading Architecture}
As the computational resources of sensors are scarce, these $N_I$ sensors offload their computational tasks, denoted by $\{Z_1,Z_2,...,Z_{N_I}\}$, to MEC servers via the wireless channels between sensors and the BS. After receiving tasks, the BS allocates virtual machines provided by MEC servers to execute these tasks. To record these tasks, sensors store the indices of publishing tasks on Ethereum via a contract function, i.e., $\mathsf{task\_publish\_contract}(\cdot)$, and the BS stores the indices of results on Ethereum via the contract function $\mathsf{result\_upload\_contract}(\cdot)$. Once the blocks of these contracts are deployed on Ethereum, no one can repudiate the debts of transactions, and the results given by MEC servers are recorded to defend against malicious MEC servers. Our work adopts the Ethereum-assisted computation offloading architecture presented in \cite{Liugit1}, which is described as follows. 
\begin{enumerate}
\item Transmitting tasks to the BS: The sensor $U_i$ sends the task data to the BS via IRS-assisted PLS schemes to resist eavesdropping attacks. To avoid inter-user interference, sensors use the time-division multiple access (TDMA) technology during transmission processes.
\item Creating contracts of publishing tasks: At the same time of task transmission, by following the contract format of Ethereum, the sensor $U_i$ launches a contract, i.e., $\#1=\mathsf{task\_publish\_contract}(Z_i)$, which includes the hash message corresponding to the offloaded task $Z_i$ and the signature of $U_i$. Also, the contract has the address information of the transaction parties\footnote{The sending address is related to $U_i$, and the received address is the constructor of the contract as the contract should be broadcast in Ethereum.}, Gas for this transaction defined by $V_i$, Gas price, etc. Then, the contract $\#1$ is broadcast in Ethereum.
\item Computing tasks in MEC server: After receiving tasks, the BS allocates computational resources of MEC servers to perform these tasks. To avoid the complexity caused by tasks and computing power partition, it is assumed that a task uses one MEC server to compute results and an MEC server can be allocated to only one task. At last, the results are uploaded to the core server for subsequent services.
\item Generating blocks of contracts of task publishing: All nodes in Ethereum synchronize the transaction of $\mathsf{task\_publish\_contract}(Z_i)$, and check its format and signature. If passing, Ethereum nodes compete with each other to win the right of charging the account of the transaction, then the block is generated in  Ethereum by the winners. All members registered in Ethereum desiring to get payments can take part in the competition of account-charging rights.
\item Creating contracts of result uploading: Without loss of generality, we assume that the task of $U_i$, i.e, $Z_i$, is offloaded to the MEC server $M_k$. After completing the task, $M_k$ launches a contract, i.e., $\mathsf{result\_upload\_contract}(Z_i)$, which includes the hash message of  results, the signature of $M_k$, allocated computational resources, address information, Gas for this transaction, Gas price, etc.
\item Generating blocks of contracts of result uploading: Similar to the block generation of contracts of task publishing, all members registered in Ethereum can take part in this competition of account-charging right. The winner can obtain Gas as a payment.
\end{enumerate}

In this system, the IoT service acquirer should pay sensors and MEC servers as the original data is generated by sensors and the data process is done by MEC servers. In this case, an IoT-enabled business model is established. 

\subsection{Communication Model}
The article considers an IRS-assisted uplink communication model with a passive $N_e$-antenna eavesdropper (Eve), an $N_b$-antenna BS, $N_I$ single-antenna sensors, and an IRS device. The IRS device is equipped with $N$ programmable phase shift elements. All channels are assumed to obey Rayleigh fading, i.e., the channel from $U_i$ to Eve is defined as $\mathbf{g}_{i}\sim\mathcal{CN}_{N_e,1}(\mathbf{0},\mathbf{I}_{N_e})$, the wiretap channel aimed at $U_i$ from IRS to Eve is defined as $\mathbf{Z}_{i}\sim\mathcal{CN}_{N_e,N}(\mathbf{0},\mathbf{I}_{N_e}\otimes\mathbf{I}_N)$, the direct link from $U_i$ to BS is defined as $\mathbf{l}_{i}\sim\mathcal{CN}_{N_b,1}(\mathbf{0},\mathbf{I}_{N_b})$, the channel from $U_i$ to IRS is defined as $\mathbf{h}_{i}\sim\mathcal{CN}_{N,1}(\mathbf{0},\mathbf{I}_N)$, and the channel from IRS to BS for the transmission of $U_i$ is defined as $\mathbf{A}_{i}\sim\mathcal{CN}_{N_b,N}(\mathbf{0},\mathbf{I}_{N_b}\otimes\mathbf{I}_N)$. The instantaneous CSIs of legitimate devices, including $\mathbf{l}_i$, $\mathbf{h}_i$, and $\mathbf{A}_i$, can be perfectly obtained via channel estimation, whereas the instantaneous Eve's CSIs $\mathbf{g}_{i}$ and $\mathbf{Z}_{i}$ are unknown. 

For the security purpose, IRS controls programmable phase shift elements via a phase shift matrix, where the phase shift matrix for the transmission period of $U_i$ is defined as an $N\times N$ matrix $\bm{\Phi}_i$, i.e., 
\begin{flalign}\label{psm}
\bm{\Phi}_i=\text{diag}[\exp(j\theta_{i,1}),...,\exp(j\theta_{i,n}),..., \exp(j\theta_{i,N})],
\end{flalign}
where $j=\sqrt{-1}$, and $\theta_{i,n}\in [0,2\pi)$ is the phase introduced by the $n$-th phase shift element of IRS at the $i$-th period. With the phase shift matrix $\bm{\Phi}_i$, the received signals at the BS and Eve can be expressed as
\begin{flalign}
& \mathbf{y}_i = \alpha_i(\mathbf{l}_{i}+\mathbf{A}_i\bm{\Phi}_i\mathbf{h}_i)x_i+\mathbf{n}_i, \label{mchannel}\\
& \mathbf{y}_{e,i} =\alpha_{e,i}(\mathbf{g}_{i}+\mathbf{Z}_{i}\bm{\Phi}_i\mathbf{h}_i)x_i+\mathbf{n}_{e,i}, \label{Le2}
\end{flalign}
where $\alpha_i$ is the path loss between the BS and $U_i$, $\alpha_{e,i}$ is the path loss between the BS and Eve, $x_i$ is the confidential information-bearing signal from $U_i$ with $\mathbb{E}(|x_i|^2)=P_i$, and $P_i$ is the transmission power of $U_i$. $\mathbf{n}_i$ and $\mathbf{n}_{e,i}$ are the additive white Gaussian noise (AWGN) obeying $\mathcal{CN}_{N_b,1}(0,\sigma^2_i\mathbf{I}_{N_b})$ and $\mathcal{CN}_{N_e,1}(0,\sigma_{e,i}^2\mathbf{I}_{N_e})$, respectively. 

\subsection{Energy Consumption Model}

The sensor $U_i$ has the computation task $Z_i$ with $D_i$-bits data that should be uploaded to the BS. After receiving the task data, the BS will select an MEC server, such as $M_k$, to perform the task of $U_i$. In this model, when $D_i$-bits data should be calculated in $M_k$, the computing period is $c_kD_i/f_k$, where $c_k$ (CCN/bit) is the CPU cycle number (CCN) per bit processing at $M_k$, representing computing efficiency of CPU chips, and $f_k$ (CCN/s) is the CCN per second of $M_k$. The energy consumption per second of $M_k$ is $\eta_k f_k^3$ (Joule/s) \cite{Dai2018}, where $\eta_k$ is the computation energy efficiency coefficient of CPU chips in $M_k$ depending on the chip architecture. The transmission delay of uploading can be expressed as $D_i/(R_{i}B)$, where $R_i$ (bit/s/Hz) is the secrecy rate of uplink channels, and $B$ is the bandwidth. The energy consumption of $Z_i$ with the assistance of $M_k$ can be formulated as
\begin{flalign}\label{ec} 
Q_{i,k} = \eta_k c_k D_i  f_k^2+\frac{D_i P_i}{R_iB}.
\end{flalign}
Note that the computing modules of Ethereum are worldwide distribution, and are independent of the computation offloading system, so the energy consumption of Ethereum is not considered here.

\subsection{Problem Formulation}
In traditional Ethereum systems, Gas payment affects the time and energy consumption of block generation, but has no effect on computational resource allocation. In real scenarios, sensors paying higher Gas have more desire to get better computational resources, rather than recording their behaviors on Ethereum. Hence, when $U_i$ is allocated $M_k$, we define the degree of dissatisfaction of $U_i$ as follows,
\begin{flalign}\label{unsatisfactory}
O_i=W_{i,k}\bigg|r(V_i)-r\bigg(\frac{f_k}{c_k}\bigg)\bigg|,
\end{flalign}
where $W_{i,k}$ is the element of the $i$-th row and $k$-th column of $\mathbf{W}$. $W_{i,k}$ is a binary variable taking 1 when $M_k$ is assigned to $U_i$, and 0 otherwise. $r(V_i)$ is the index of the descending order of Gas set provided by $N_I$ sensors, and $r(f_k/c_k)$ is the index of the descending order of computing power of $N_K$ MEC servers.  

With the consideration of the energy consumption and degrees of dissatisfaction, the problem of the computation offloading focuses on the energy consumption minimization with constraints as follows.
\begin{flalign}
\text{P1: }& \min_{\bm{\Phi}_i,\forall i, \mathbf{W}}\sum_{i=1}^{N_I}\sum_{k=1}^{N_K}W_{i,k}Q_{i,k},\\
& \text{s.t. } O_{i}\leq \epsilon, \forall i, \label{p1c1} \\ 
&\quad \; W_{i,k}= \{0 \text{ or }1\}, \forall i,k, \label{p1c2} \\
&\quad \; \sum_{i=1}^{N_I}W_{i,k}\leq 1, \quad \sum_{k=1}^{N_K}W_{i,k}\leq 1, \label{p1c3} \\
&\quad \; \text{Eq. }(\ref{psm}), \label{p1c4}
\end{flalign}
where Eq. (\ref{p1c1}) requires that the degrees of dissatisfaction should be small than a threshold $\epsilon$ for all sensors, and $\epsilon \in [0,1,...,N_I-1]$ can be adjusted manually. Eq. (\ref{p1c2}) is the binary constraint representing the allocation factor. Eq. (\ref{p1c3}) reveals that each sensor can be allocated with only one MEC server, and each MEC server is only assigned to one sensor. Eq. (\ref{p1c4}) is the passive phase shift matrix constraint.

It is obvious that $\bm{\Phi}_i$ and $\mathbf{W}$ are coupled variables in the optimization of the $i$-th period, and P1 is non-convex mixed-integer problem. To tackle the problem, we transform P1 into two sub-problems, i.e., phase shift optimization and computational resource allocation, then solve them step-by-step. 

\section{Phase Shift Optimization for PLS and Its Ergodic Secrecy Rate}\label{proposed1}

From Eq. (\ref{ec}), we find that the energy consumption $Q_{i,k}$ decreases with the increasing secrecy rate $R_i$ for all sensors. To reduce the energy consumption in the data uploading process, we optimize $\bm{\Phi}_i$ to maximize $R_i, \forall i$.

\subsection{Phase Shift Optimization}
According to Eqs. (\ref{mchannel}) and (\ref{Le2}), the secrecy rate $R_i$ between $U_i$ and BS is given as follows \cite{Liu2020},
\begin{flalign}\label{esr}
R_i& = (C_{m,i}-C_{w,i})^+,
\end{flalign}
where
\begin{flalign}
&C_{m,i}=\log_2\bigg(1+\frac{\alpha_i^2P_i}{\sigma_i^2}|\mathbf{l}_{i}+\mathbf{A}_i\bm{\Phi}_i\mathbf{h}_i|^2\bigg), \label{cm} \\ 
&C_{w,i}= \log_2\bigg(1+\frac{\alpha_{e,i}^2P_i}{\sigma_{e,i}^2}|\mathbf{g}_{i}+\mathbf{Z}_{i}\bm{\Phi}_i\mathbf{h}_i|^2\bigg). \label{cw}
\end{flalign}
The objective is to find the optimal $\bm{\Phi}_i$ to achieve $\{R_i^*,\bm{\Phi}^*\}=\max_{\bm{\Phi}}R_i$. However, it is hard to maximize $R_i$ as $\mathbf{g}_{i}$ and $\mathbf{Z}_i$ are unknown, so the objective is transformed to maximize the channel gain between $U_i$ and the BS as follows, 
\begin{flalign}
\text{P2: }& \max_{\bm{\Phi}} |\mathbf{l}_{i}+\mathbf{A}_i\bm{\Phi}_i\mathbf{h}_i|^2, \\
& \text{s.t. } \text{Eq. }(\ref{psm}). \label{p2c1}
\end{flalign}
P2 is a classic Boolean quadratic program problem. The traditional methods to solve P2 are the SDR technique \cite{Wu2019} and manifold optimization \cite{Yuiccc2019}. The objective parameter $\bm{\Phi}_i$ is the middle of two CSI parameters and is hard to handle. Hence, refer to \cite{Yuiccc2019}, $\mathbf{A}_i\bm{\Phi}_i\mathbf{h}_i$ can be re-formulated as 
\begin{flalign}\label{transm}
\mathbf{A}_i\bm{\Phi}_i\mathbf{h}_i=\mathbf{A}_i\text{diag}(\mathbf{h}_i)\mathbf{q}_i=\bm{\Theta}_i\mathbf{q}_i,
\end{flalign}
where $\mathbf{q}_i=\text{vec}(\bm{\Phi}_i)=[\exp(j\theta_{i,1}),..., \exp(j\theta_{i,N})]^{\rm{T}}$. With Eq. (\ref{transm}), the objective of P2 can be transformed as follows,
\begin{flalign}\label{p2}
& |\mathbf{l}_i+\mathbf{A}_i\bm{\Phi}_i\mathbf{h}_i|^2 \notag \\
&=\mathbf{q}_i^{\rm{H}}\bm{\Theta}_i^{\rm{H}}\bm{\Theta}_i\mathbf{q}_i+|\mathbf{l}_i|^2+\mathbf{l}_i^{\rm{H}}\bm{\Theta}_i\mathbf{q}_i+\mathbf{q}_i^{\rm{H}}\bm{\Theta}_i^{\rm{H}}\mathbf{l}_i.
\end{flalign}
Then, the minus of the objective in P2 is expressed as a function with respect to $\mathbf{q}_i$, i.e.,
\begin{flalign}\label{mo1}
 f(\mathbf{q}_i)&=-(\mathbf{q}_i^{\rm{H}}\bm{\Theta}_i^{\rm{H}}\bm{\Theta}_i\mathbf{q}_i+|\mathbf{l}_i|^2+\mathbf{l}_i^{\rm{H}}\bm{\Theta}_i\mathbf{q}_i+\mathbf{q}_i^{\rm{H}}\bm{\Theta}_i^{\rm{H}}\mathbf{l}_i).
\end{flalign}
The constraint (\ref{p2c1}) is defined by a complex circle manifold~\cite{Yuiccc2019}, which is given by
\begin{flalign}\label{oman}
\mathcal{O}=\big\{\mathbf{q}_i\in\mathbb{C}^{N}\big||[\mathbf{q}_i]_n|^2=1,n=1,...,N\big\},
\end{flalign}
where $[\mathbf{q}_i]_n$ is the $n$-th element of $\mathbf{q}_i$. It is obvious that Eq.~(\ref{oman}) is equivalent to constraint (\ref{p2c1}), and the optimal point of original problem P2, i.e., $\mathbf{q}_i^*$ can be found on the complex circle manifold $\mathcal{O}$. Hence, P2 is equivalently transformed as
\begin{flalign}
\text{P3: } &\min_{\mathbf{q}_i\in\mathcal{O}}f(\mathbf{q}_i). \label{P9O} 
\end{flalign}
P3 is regarded as a complex circle manifold optimization problem. To find $\mathbf{q}_i^*$, we use the Riemannian gradient descent algorithm presented in \cite{Yuiccc2019}, where the Riemannian gradient of a manifold point is decided jointly by the Euclidean gradient and the tangent space on manifold $\mathcal{O}$ at this point. Without loss of generality, the $s$-th point (the $s$-th iteration in gradient descent) of manifold $\mathcal{O}$ is $\mathbf{q}_i(s)\in \mathcal{O}$, and the tangent space for $\mathcal{O}$ at the $s$-th point can be expressed as
\begin{flalign}\label{ts}
T_{\mathbf{q}_i(s)}\mathcal{O}=\big\{\mathbf{v}\in\mathbb{C}^{N}:\mathfrak{R}[\mathbf{v}\circ \mathbf{q}_i(s)^{\dagger}]=\mathbf{0}\big\},
\end{flalign}
where $\mathbf{v}$ is the tangent vector at $\mathbf{q}_i(s)$, $\circ$ is the Hadamard product, $\mathfrak{R}(\cdot)$ means the real part of a complex variable, and $(\cdot)^{\dagger}$ is the conjugate operation. The objective function $f(\mathbf{q}_i)$ with respect to $\mathbf{q}_i$ is defined in Eq. (\ref{mo1}), and the Euclidean gradient at $\mathbf{q}_i(s)$ is 
\begin{flalign}
\triangledown_{\mathbf{q}_i(s)}f(\mathbf{q}_i)=-2\bm{\Theta}_i^{\rm{H}}\bm{\Theta_i}\mathbf{q}_i(s)-2 \bm{\Theta}_i^{\rm{H}}\mathbf{l}_i.
\end{flalign}
Among all tangent vectors on $T_{\mathbf{q}_i(s)}\mathcal{O}$, the one that yields the fastest increase of the objective function is defined as the Riemannian gradient \cite{Yuiccc2019}, i.e., $\text{grad}_{\mathbf{q}_i(s)}f(\mathbf{q}_i)$, which is the projection from the Euclidean gradient to the tangent space $\mathcal{O}$ as follows,
\begin{flalign}\label{grad}
 \text{grad}_{\mathbf{q}_i(s)}f(\mathbf{q}_i)&=\triangledown_{\mathbf{q}_i(s)}f(\mathbf{q}_i) \notag \\ 
&-\mathfrak{R}(\triangledown_{\mathbf{q}_i(s)}f(\mathbf{q}_i)\circ \mathbf{q}_i(s)^{\dagger})\circ\mathbf{q}_i(s).
\end{flalign}
In the Riemannian gradient descent process, the next point of $\mathbf{q}_i(s)$ is with the direction of $\varpi_s\mathbf{p}_s$, where $\varpi_s$ is the step size and $\mathbf{p}_s$ is the descent direction vector. However, the movement can not guarantee that the next point $\mathbf{q}_i(s+1)$ is on manifold $\mathcal{O}$. Hence, we introduce the retraction function to map a vector on $T_{\mathbf{q}_i(s)}\mathcal{O}$ onto manifold $\mathcal{O}$, which is given as
\begin{flalign}\label{next}
\mathbf{q}_i(s+1)=\text{R}_{\mathbf{q}_i(s)}(\varpi_s\mathbf{p}_s),
\end{flalign}
where a typical retraction is the normalization function, i.e., $\text{R}_{\mathbf{x}}(\mathbf{y})=\frac{y_i}{|y_i|},\forall i$. Based on the Riemannian gradient and retraction function, we use the conjugate-gradient descent method to find the optimal phase shift matrix, which is shown in Algorithm \ref{psmo}. In conjugate-gradient descent algorithm, the update rule for the search direction on manifolds is given by
\begin{flalign}\label{sd}
\mathbf{p}_{s+1} & =-\text{grad}_{\mathbf{q}_i(s+1)}f(\mathbf{q}_i)+\varphi_{s}\mathcal{T}_{\mathbf{q}_i(s)\to\mathbf{q}_i(s+1)}(\mathbf{p}_{s}),
\end{flalign}
where $\varphi_{s}$ is the Polak-Ribiere parameter, and $\mathcal{T}_{\mathbf{q}_i(s)\to\mathbf{q}_i(s+1)}(\mathbf{p}_{s})$ is the mapping function of the tangent vector $\mathbf{p}_{s}$ from the tangent space $T_{\mathbf{q}_i(s)}\mathcal{O}$ to the tangent space $T_{\mathbf{q}_i(s+1)}\mathcal{O}$. The mapping function is given as
\begin{flalign}
&\mathcal{T}_{\mathbf{q}_i(s)\to\mathbf{q}_i(s+1)}(\mathbf{p}_{s}) \notag \\
&=T_{\mathbf{q}_i(s)}\mathcal{O}\to T_{\mathbf{q}_i(s+1)}\mathcal{O} \notag \\
&=\mathbf{p}_{s}-\mathfrak{R}(\mathbf{p}_{s}\circ[\mathbf{q}_i(s+1)]^{\dagger})\circ\mathbf{q}_i(s+1).
\end{flalign}
Since the second derivative of $f(\mathbf{q}_i)$, i.e., $\triangledown^2_{\mathbf{q}_i}f(\mathbf{q}_i)=-2\bm{\Theta}_i^{\rm{H}}\bm{\Theta}_i$ and $\bm{\Theta}_i^{\rm{H}}\bm{\Theta}_i$ is positive semidefinite, $f(\mathbf{q}_i)$ is concave, such that Algorithm \ref{psmo} based on conjugate-gradient descent cannot be guaranteed to converge to the optimal point~\cite{manopt}. Hence, the output $\bm{\Phi}^*$ in Algorithm \ref{psmo} is the local optimal result.

\begin{algorithm}[t]
\small
\KwData{$\{\mathbf{l}_i, \mathbf{A}_i, \mathbf{h}_i, \forall i\}, N, N_I$}
\KwResult{$\bm{\Phi}_i^*, \forall i$}
\For{$i=1:1:N_I$}{
Initialize $\varpi_0$ and $\varphi_0$\;
Initialize point $\mathbf{q}_i(0)$, get $\text{grad}_{\mathbf{q}_i(0)}f(\mathbf{q}_i)$\;
\While{$|\rm{grad}_{\mathbf{q}_i(s)}f(\mathbf{q}_i)|\leq \psi $}{
Get Riemannian gradient $\text{grad}_{\mathbf{q}_i(s)}f(\mathbf{q}_i)$ by Eq. (\ref{grad})\;
Get conjugate search direction $\mathbf{p}_{s}$ by Eq. (\ref{sd})\;
Get next point $\mathbf{q}_i(s+1)$ by Eq. (\ref{next}) \;
Determine step size $\varpi_s$ and $\varphi_s$ proposed in \cite{absil2009optimization} \;
}
Get $\mathbf{q}_i^*=\mathbf{q}_i(s)$\;
$\bm{\Phi}_i^*=\text{diag}(\mathbf{q}_i^*)$\;
}
\textbf{Procedure End}
\caption{Phase Shift Matrix Optimization based on Conjugate-gradient Descent.}\label{psmo}
\end{algorithm}

The computational complexity of Algorithm \ref{psmo} is analyzed as follows. To meet the convergence condition, the conjugate-gradient descent algorithm requires $N^2$ iterations \cite{shewchuk1994introduction}. Each iteration needs the calculations of Eqs. (\ref{grad}), (\ref{next}), (\ref{sd}), and the parameter update, i.e.,  step 8, which requires $2N^2+4N$, $N$, $N$, and $4N^2$ inner-iterations, respectively. Step 10 needs $N$ iterations to generate $\bm{\Phi}^*$. For $N_I$ sensors, the number of total iterations of Algorithm \ref{psmo} is $N_I(4N^4+6N^3+N)$.

\subsection{Ergodic Secrecy Rate}

The secrecy rate can be viewed as the inherent property for a given instantaneous CSI realization. The ergodic secrecy rate is the average value of secrecy rates among all realizations of CSIs, which is a usual metric for PLS schemes \cite{Liu2020, Liu2021}. As the CSIs of wiretap channels are unknown, to calculate the ergodic secrecy rate, the gain of wiretap channels in Eq. (\ref{Le2}), i.e., $|\mathbf{g}_{i}+\mathbf{Z}_{i}\bm{\Phi}_i\mathbf{h}_i|^2$, can be regarded as a random variable $X$. The Gamma distribution can deduce the probability density function (PDF) of $X$, as shown in Lemma 1. With the PDF of $X$, the expression of ergodic secrecy rate can be derived as security performance metrics for computational resource allocation. The Gamma distribution can represent the sum of multiple independent exponentially distributed random variables, so it is suitable for representing the gain of IRS-assisted channels, such as \cite{VanChien2021}. 

\vspace{0.1 in}

\begin{lemma}
For an $N\times N$ fixed $\bm{\Phi}$ generated by Eq. (\ref{psm}), three independent random variables $\mathbf{g}\sim\mathcal{CN}_{1,N_e}(\mathbf{0},\mathbf{I}_{N_e})$, $\mathbf{h}\sim\mathcal{CN}_{N,1}(\mathbf{0},\mathbf{I}_{N})$, and $\mathbf{Z}\sim \mathcal{CN}_{N_e,N}(\mathbf{0},\mathbf{I}_{N_e}\otimes \mathbf{I}_{N})$, if we have a random variable $x$ as
\begin{flalign}\label{aux1}
&x=|\mathbf{g}+\mathbf{Z}\bm{\Phi}\mathbf{h}|^2, 
\end{flalign}
the PDF and cumulative distribution function (CDF) of $X\sim X(N_e,N,\bm{\Phi})$ can be expressed as
\begin{flalign}
&f_{X}(x)=\frac{1}{\Gamma(\mu)\nu^{\mu}}x^{\mu-1}\exp(-x/\nu), \label{pdf}\\
& F_{X}(x)=1-\frac{1}{\Gamma(\mu)}\Gamma(\mu,\frac{x}{\nu}), \label{cdf}\\
&\mu=\frac{N_e(1+N)^2}{(1+N)^2+(1+N_e)N}, \label{mu} \\
& \nu=1+N+\frac{(1+N_e)N}{1+N},\label{nu} 
\end{flalign}
where $\Gamma(\cdot)$ is the Gamma function, and $\gamma(\epsilon, \eta)$ is the incomplete gamma function defined as follows,
\begin{flalign}\label{ligamma}
\gamma(\epsilon,\eta)=\int_0^{\eta}\exp(-z)z^{\epsilon-1}\text{d}z.
\end{flalign}
\end{lemma}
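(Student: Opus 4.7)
The plan is to reduce $X$ to a product of two independent Gamma random variables by conditioning, and then match the first two moments against the Gamma family to recover the claimed shape $\mu$ and scale $\nu$. First I would exploit the rotational invariance of the complex normal distribution: since $\bm{\Phi}$ is diagonal with unit-modulus entries, $\mathbf{w}:=\bm{\Phi}\mathbf{h}$ is still distributed as $\mathcal{CN}_{N,1}(\mathbf{0},\mathbf{I}_N)$ and is independent of $\mathbf{g}$ and $\mathbf{Z}$. So the distribution of $X=|\mathbf{g}+\mathbf{Z}\mathbf{w}|^2$ does not depend on the particular $\bm{\Phi}$ satisfying Eq.~(\ref{psm}), and it suffices to work with $\mathbf{w}$ in place of $\bm{\Phi}\mathbf{h}$.

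Next I would condition on $\mathbf{w}$. Given $\mathbf{w}$, the vector $\mathbf{g}+\mathbf{Z}\mathbf{w}$ has i.i.d.\ entries with distribution $\mathcal{CN}(0,1+|\mathbf{w}|^2)$, so $X\mid\mathbf{w}$ is equal in law to $(1+|\mathbf{w}|^2)G$ for an independent $G$ following a $\text{Gamma}(N_e,1)$ law (i.e.\ a $\chi^2_{2N_e}/2$). Writing $Y:=|\mathbf{w}|^2\sim\text{Gamma}(N,1)$, this yields the compact representation $X\stackrel{d}{=}(1+Y)G$ with $Y$ and $G$ independent. From this representation I would read off the two moments directly: $E[X]=N_e(1+N)$ by independence, and the law of total variance applied with $E[X\mid Y]=(1+Y)N_e$ and $\text{Var}(X\mid Y)=(1+Y)^2 N_e$ gives $\text{Var}(X)=N_e(1+N)^2+N_e(1+N_e)N$.

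Because $(1+Y)G$ has no convenient closed-form density, I would then invoke the standard second-moment-matching approximation that has become customary for IRS-induced channels (as in \cite{VanChien2021}) and replace $X$ by a $\text{Gamma}(\mu,\nu)$ random variable whose first two moments coincide with those computed above. Solving $\mu\nu=E[X]$ and $\mu\nu^2=\text{Var}(X)$ gives $\nu=\text{Var}(X)/E[X]$ and $\mu=(E[X])^2/\text{Var}(X)$; substituting the expressions above and simplifying (cancelling a common factor of $1+N$ in $\nu$ and a common factor of $N_e$ in $\mu$) reproduces the claimed formulas (\ref{mu}) and (\ref{nu}) exactly. The PDF (\ref{pdf}) and the CDF (\ref{cdf}), written with the incomplete Gamma function (\ref{ligamma}), are then the textbook Gamma density and its complementary CDF.

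The main obstacle is conceptual rather than computational: the lemma is phrased as an exact distributional identity, while the derivation above is really a moment-matched Gamma \emph{approximation} to the law of the product $(1+Y)G$. I would make this explicit in the final write-up, justify the approximation by appeal to the cited IRS-channel literature in which the Gamma family accurately models such compound Gaussian-product variables, and (if space allows) include a brief numerical sanity check against Monte Carlo samples to corroborate the fit in the relevant regime of $N$ and $N_e$.
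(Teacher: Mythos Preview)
Your proposal is correct and lands on the same overall strategy as the paper: compute $\mathbb{E}(X)$ and $\mathbb{V}(X)$ exactly and then moment-match to a Gamma$(\mu,\nu)$ law, recovering (\ref{mu})--(\ref{nu}). The difference is in how the second moment is obtained. The paper expands $\mathbb{E}(x^2)=\mathbb{E}\big(\big||\mathbf{g}|^2+|\mathbf{Z}\bm{\Phi}\mathbf{h}|^2+\mathbf{h}^{\rm H}\bm{\Phi}^{\rm H}\mathbf{Z}^{\rm H}\mathbf{g}+\mathbf{g}^{\rm H}\mathbf{Z}\bm{\Phi}\mathbf{h}\big|^2\big)$ term by term, invoking noncentral chi-square moments (its Lemma~\ref{nchis}) and the auxiliary variables $\mathbf{u}_1=\bm{\Phi}\mathbf{h}$, $\mathbf{u}_2=\mathbf{Z}\bm{\Phi}\mathbf{h}/|\bm{\Phi}\mathbf{h}|$ to evaluate each piece. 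Your route is more structural: conditioning on $\mathbf{w}=\bm{\Phi}\mathbf{h}$ gives the clean product representation $X\stackrel{d}{=}(1+Y)G$ with independent $Y\sim\text{Gamma}(N,1)$ and $G\sim\text{Gamma}(N_e,1)$, from which the law of total variance yields $\mathbb{V}(X)$ in two lines rather than five cross-terms. Both arguments also make the $\bm{\Phi}$-invariance explicit in the same way (unitary rotation of a standard complex normal). Your observation that the lemma is really a moment-matched \emph{approximation} rather than an exact distributional identity is apt; the paper does not say this in the proof itself but effectively concedes the point immediately afterward via the KS test in Fig.~\ref{KS}.
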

\begin{IEEEproof}
See Appendix A. 
\end{IEEEproof}

\vspace{0.2 in}

In order to examine the fitting performance of Gamma distributions, we use the KS  method to find the maximum divergences between the actual distribution and the Gamma distribution \cite{weisstein2002crc}, where statistic $D_p$ is defined as the maximum divergences of PDF tests as follows,
\begin{flalign}
&D_p=\max|f_{X}(x)-\hat{f}_{X}(x)|,
\end{flalign}
where $\hat{f}_{X}(x)$ is the actual PDF of $X\sim X(N_e,N,\bm{\Phi})$ obtained by Monte Carlo simulations. As shown in Fig. \ref{KS}, $D_p$ is with an order of magnitude around $10^{-4}$, so the Gamma distributions are in a good agreement with the actual distributions. 

\begin{figure}[t]
\centering
\includegraphics[width=0.95\linewidth]{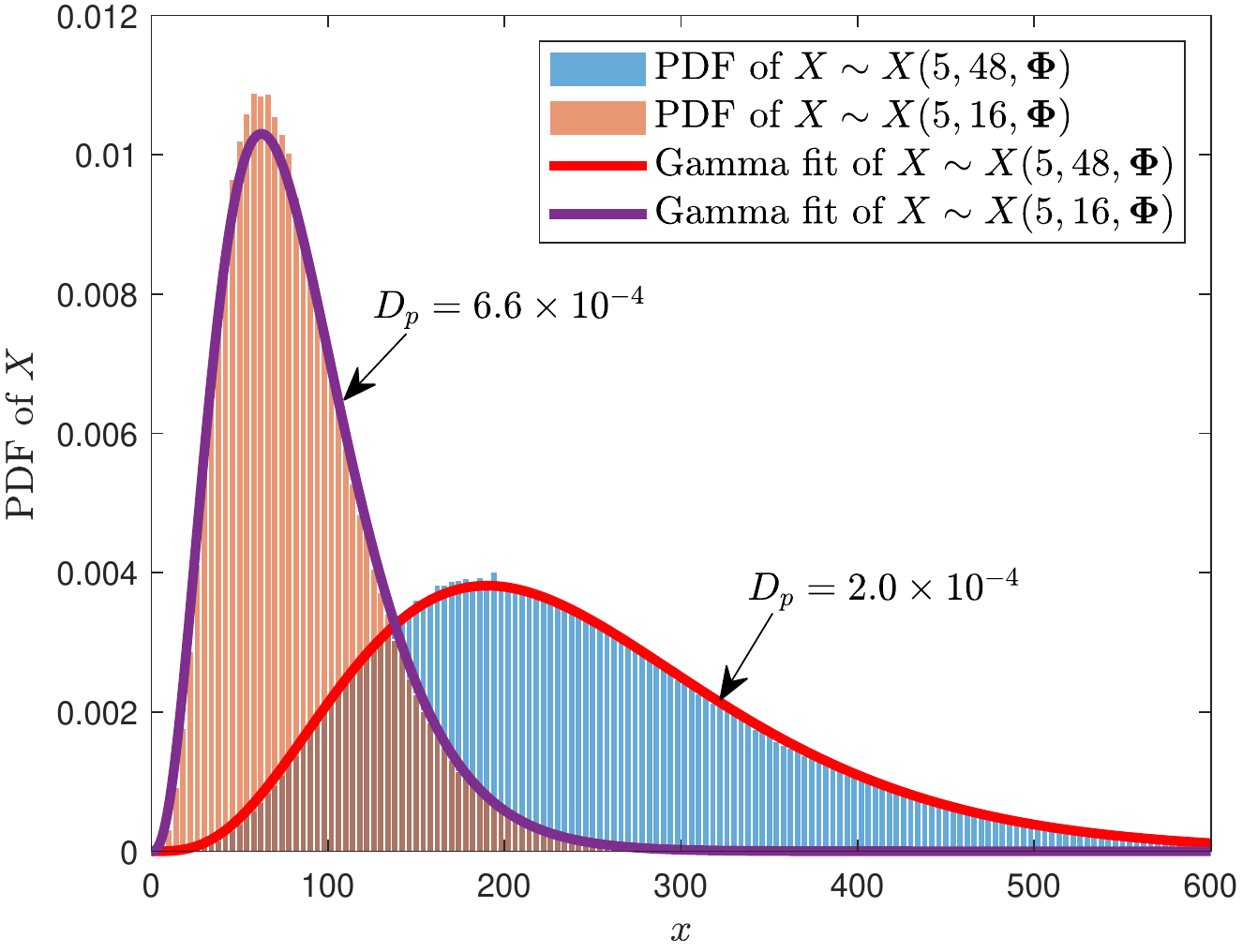}
\caption{KS test for Gamma distributions and real distributions of $X\sim X(N_e,N,\bm{\Phi})$ via $10^5$ Monte Carlo simulations, where $\bm{\Phi}$ is an arbitrary diagonal matrix as defined in Lemma 1.}
\label{KS}
\end{figure}

With Lemma 1, we deduce the expression of ergodic secrecy rate with optimal phase shift matrix $\bm{\Phi}_i^*$. The ergodic secrecy rate is defined as follows \cite{Liu2020},
\begin{flalign}\label{esr}
R_i& = [\mathbb{E}(C_{m,i}|\bm{\Phi}^*_i)-\mathbb{E}(C_{w,i}|\bm{\Phi}^*_i)]^+ \\ 
&\leq \mathbb{E}[(C_{m,i}-C_{w,i})^{+}|\bm{\Phi}^*_i], \label{realesr}
\end{flalign}
where $\{=\}$ in Eq. (\ref{realesr}) holds if and only if the instantaneous secrecy rate $\{C_{m,i}-C_{w,i}\}$ is nonnegative in all CSI realizations. Since Eve's CSIs $\mathbf{g}_{i}$ and $\mathbf{Z}_i$ are unknown, it is hard to determine whether an instantaneous secrecy rate is nonnegative or not, so we use the lower bound of real ergodic secrecy rate as the performance metric for the optimization process, as shown in Eq. (\ref{esr}).

\begin{theorem}[Expression of ergodic secrecy rate]
The expression of ergodic secrecy rate of $U_i$ with $\bm{\Phi}^*_i$ can be expressed as
\begin{flalign}\label{eesr}
&\bar{R}_i=[\bar{C}_{m,i}-\mathbb{E}(C_{w,i}|\bm{\Phi}^*_i)]^+,
\end{flalign}
where 
\begin{flalign}
\mathbb{E}(C_{w,i}|\bm{\Phi}^*_i)=\frac{1}{\ln(2)\Gamma(\mu)}G_{2,3}^{3,1}\bigg( \frac{\sigma_{e,i}^2}{\nu P_i \alpha_{e,i}^2 }\bigg| \begin{matrix}0, 1 \\
0,0, \mu
\end{matrix} \bigg),
\end{flalign}
and $\mathbb{E}(C_{m,i}|\bm{\Phi}^*_i)=\bar{C}_{m,i}$. $\mu$ and $\nu$ are defined in Eqs. (\ref{mu}) and (\ref{nu}), respectively. Note that $\bar{C}_{m,i}$ is the ergodic channel capacity between $U_i$ and BS, which can be measured by the statistical method as the CSIs of $U_i$, i.e., $\mathbf{l}_i$, $\mathbf{A}_i$, and $\mathbf{h}_i$ in each transmission burst can be obtained perfectly.
\end{theorem}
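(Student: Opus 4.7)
The plan is to reduce the theorem to evaluating a single Gamma-weighted logarithmic integral and then expressing it in Meijer $G$-function form. Since the decomposition $\bar R_i=[\bar C_{m,i}-\mathbb{E}(C_{w,i}|\bm{\Phi}^*_i)]^+$ is immediate from Eq.~(\ref{esr}), and the theorem explicitly leaves $\bar C_{m,i}$ to be measured empirically from the known legitimate CSIs, the only nontrivial task is to obtain a closed-form expression for the wiretap-channel ergodic capacity $\mathbb{E}(C_{w,i}|\bm{\Phi}^*_i)$.

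First I would invoke Lemma~1 with $\bm{\Phi}=\bm{\Phi}^*_i$ so that the wiretap-gain variable $X:=|\mathbf{g}_i+\mathbf{Z}_i\bm{\Phi}^*_i\mathbf{h}_i|^2$ has the Gamma PDF of Eq.~(\ref{pdf}) with shape $\mu$ and scale $\nu$ from Eqs.~(\ref{mu})--(\ref{nu}). Setting $\gamma:=P_i\alpha_{e,i}^2/\sigma_{e,i}^2$, Eq.~(\ref{cw}) then gives
\begin{flalign*}
\mathbb{E}(C_{w,i}|\bm{\Phi}^*_i)=\frac{1}{\ln 2\,\Gamma(\mu)\,\nu^{\mu}}\int_0^{\infty}\!\ln(1+\gamma x)\,x^{\mu-1}e^{-x/\nu}\,\mathrm{d}x.
\end{flalign*}
A change of variables $t=x/\nu$ normalises the exponential and pushes every channel coefficient into the logarithm, reducing everything to computing $I(\mu,a):=\int_0^\infty t^{\mu-1}e^{-t}\ln(1+at)\,\mathrm{d}t$ at the value $a=\gamma\nu=\nu P_i\alpha_{e,i}^2/\sigma_{e,i}^2$.

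Second, I would evaluate $I(\mu,a)$ through the Meijer $G$-function machinery. Writing the logarithm as $\ln(1+at)=G^{1,2}_{2,2}(at|\ldots)$ and the exponential as $e^{-t}=G^{1,0}_{0,1}(t|\ldots)$ and applying the standard Mellin-type integral identity for the product of two Meijer $G$-functions weighted by $t^{\mu-1}$ (Gradshteyn--Ryzhik 7.811/7.813) yields a Meijer $G$-function of the argument $a$. The reflection rule $G^{m,n}_{p,q}(z|\cdot)=G^{n,m}_{q,p}(1/z|\cdot)$ is then applied to exchange $a$ for its reciprocal $1/a=\sigma_{e,i}^2/(\nu P_i\alpha_{e,i}^2)$, and a redundant upper/lower parameter produced by the convolution cancels. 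After this cancellation the result collapses to the $G^{3,1}_{2,3}$ form with upper list $(0,1)$ and lower list $(0,0,\mu)$, divided by $\ln 2\cdot\Gamma(\mu)$, which is precisely the expression claimed for $\mathbb{E}(C_{w,i}|\bm{\Phi}^*_i)$; the outer $[\cdot]^+$ is inherited directly from the lower bound in Eq.~(\ref{esr}).

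The main obstacle will be the Meijer $G$ parameter bookkeeping: matching the specific indices $(m,n,p,q)=(3,1,2,3)$ and the exact parameter lists $(0,1)$ and $(0,0,\mu)$ requires the reciprocal-argument transformation to be applied in the correct direction and the coincident parameter generated by the convolution to be identified and cancelled cleanly. The remaining ingredients --- the invocation of Lemma~1, the linear substitution $t=x/\nu$, and the split of the ergodic secrecy rate into main- and wiretap-channel ergodic capacities --- are routine.
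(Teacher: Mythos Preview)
Your proposal is correct and will yield the stated closed form, but it differs from the paper's route in one substantive way. The paper first integrates by parts (equivalently, uses the identity $\int_0^\infty \ln(1+\gamma x)f_X(x)\,dx=\int_0^\infty \frac{1-F_X(z/\gamma)}{1+z}\,dz$) so that the only special function left under the integral is the complementary CDF $1-F_X$, which for the Gamma law is the upper incomplete Gamma $\Gamma(\mu,\cdot)$. It then writes $\Gamma(\mu,x)=G^{2,0}_{1,2}(x\,|\,{1\atop 0,\mu})$ and applies the single-$G$ identity $\int_0^\infty z^{\rho-1}(z+\nu)^{-\eta}G^{m,n}_{p,q}(\mu z\,|\cdots)\,dz=\tfrac{\nu^{\rho-\eta}}{\Gamma(\eta)}G^{m+1,n+1}_{p+1,q+1}(\mu\nu\,|\cdots)$ (Gradshteyn--Ryzhik 7.812) with $\rho=\eta=\nu=1$, which lands directly on $G^{3,1}_{2,3}\big(\sigma_{e,i}^2/(\nu P_i\alpha_{e,i}^2)\,\big|\,{0,1\atop 0,0,\mu}\big)$ with no reflection and no parameter cancellation needed.

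Your approach --- keep the PDF, write both $\ln(1+at)$ and $e^{-t}$ as Meijer $G$-functions, and invoke the two-$G$ Mellin convolution --- is equally valid but trades the integration-by-parts step for the extra bookkeeping you flag at the end (reflection to argument $1/a$ and removal of the coincident parameter). The paper's route buys a shorter parameter chase; yours buys a more mechanical, template-driven derivation that generalises immediately to other fading laws whose PDF, rather than CDF, has a convenient $G$-representation.
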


\begin{proof}
See Appendix B. 
\end{proof}

\section{Gas-Oriented Computation Offloading Scheme}\label{proposed2}
The proposed computation offloading scheme includes two processes, i.e., computational resource allocation and data uploading. The secrecy rates $R_i, \forall i$ in Eq. (\ref{ec}) are time-variant as the coherence time is much shorter than the computation offloading duration. In this case, we use ergodic secrecy rate $\bar{R}_i$ in Theorem 1 as the metric to allocate computational resource, i.e., use $\bar{R}_i, \forall i$ to calculate $Q_{i,k}$ in Eq. (\ref{ec}) instead of $R_i, \forall i$. In the data uploading process, the CSIs of legitimate users are time-variant and can be obtained via channel estimation in each transmission burst, so adaptive PLS coding should be designed to improve the efficiency of secure transmissions.

\subsection{Computational Resource Allocation}

\subsubsection{Bipartite graph generation} In order to keep the degrees of dissatisfaction of each sensor below a threshold, i.e., constraint (\ref{p1c1}), $N_I$ sensors are sort as $\{ U_{1'},U_{2'},...,U_{N_I'}\}$ where their Gas obey $V_{1'}\geq V_{2'} \geq ... \geq V_{N_I'}$. Similarly, the MEC servers are sort as $\{ M_{1''},M_{2''},...,M_{N_I''}\}$ where their computational power obey $f_{1''}/c_{1''}\geq f_{2''}/c_{2''} \geq ... \geq f_{N_I''}/c_{N_I''}$ by dropping $N_K-N_I$ weaker MEC servers. We define $\mathcal{S}=\{1',...,N_I'\}$ and $\mathcal{M}=\{1'',...,N_I''\}$ as the indices of the re-ordered sets of sensors and MEC servers, respectively. 

Since secrecy rate is time-varying during the computation offloading process, $Q_{i,k}$ defined in Eq. (\ref{ec}) can not be measured. Instead of the secrecy rate, Theorem 1 provides ergodic secrecy rate that is used to calculate $Q_{i,k}$ for each pair of a sensor and an MEC server as follows,
\begin{flalign} 
Q_{i,k} = \eta_k c_k D_i  f_k^2+\frac{D_i P_i}{\bar{R}_iB},  i\in \mathcal{S}, k\in \mathcal{M},
\end{flalign}
and then a bipartite graph is generated to record all sensor-MEC combinations, as shown in Fig. \ref{bip}. To meet constraint (\ref{p1c1}), the case of $r(V_i)-r(f_k/c_k)> \epsilon$ for any sensor-MEC pair should not be recorded in the bipartite graph. Hence, if $|i-k|>\epsilon$, $i\in \mathcal{S}, k\in \mathcal{M}$, $Q_{i,k}$ is set to be $\infty$.

\begin{figure}[htb]
\centering
\includegraphics[width=1\linewidth]{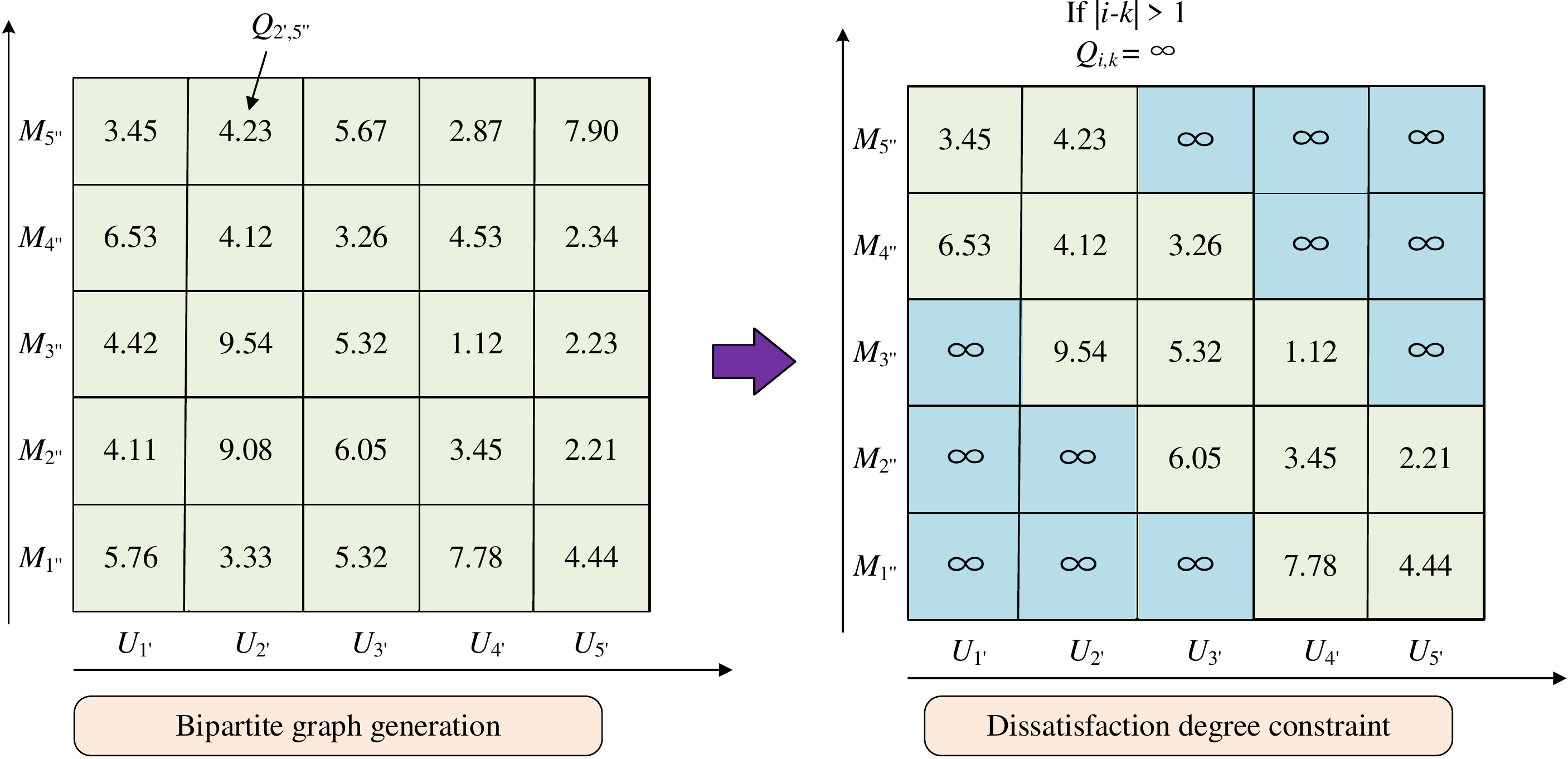}
\caption{Bipartite graph generation with the dissatisfaction degree constraint, where $\epsilon=1$.}
\label{bip}
\end{figure}

\subsubsection{Matching process} We set an $N_I\times N_I$ matching matrix $\mathbf{W}$, whose $(i,k)$-th element represents an edge, e.g., $W_{i,k}=1$ means that the vertexes $i$ and $k$ are linked. For another edge $W_{ii,kk}$, we have $i\neq ii$ and $k\neq kk$ to meet the constraint that any two edges do not share a vertex. With $\mathbf{W}$, the computational resource allocation problem can be equivalently transformed to a 2-dimensional matching problem as 
\begin{flalign}
\text{P4: }& \min_{\mathbf{W}}\sum_{i\in\mathcal{S}}^{N_I}\sum_{k\in \mathcal{M}}^{N_I} W_{i,k}Q_{i,k}, \\
& \text{s.t. } W_{i,k}= \{0 \text{ or }1\}, \forall i,k, \label{p3c1} \\
&\quad \; \sum_{i\in\mathcal{S}}^{N_I}W_{i,k}\leq 1, \; \sum_{k\in \mathcal{M}}^{N_I}W_{i,k}\leq 1. \label{p3c2}
\end{flalign}
It is obvious that $W_{i,k}$ is a binary variable taking 1 when $M_k$ is assigned to $U_i$, and 0 otherwise. Eq. (\ref{p3c2}) reveals that each sensor can be allocated with only one MEC server, and each MEC server is only assigned to one sensor. The optimization is to find the optimal $\mathbf{W}$ to minimize $\sum_{i\in\mathcal{S}}^{N_I}\sum_{k\in \mathcal{M}}^{N_I} W_{i,k}Q_{i,k}$. P4 is convex and the optimal $\mathbf{W}$ can be solved by the Kuhn-Munkres (KM) algorithm \cite{duan2012scaling}. 

\begin{figure*}[!htp]
\begin{subfigure}[t]{.5\textwidth}
\centering
\includegraphics[width=0.88\textwidth]{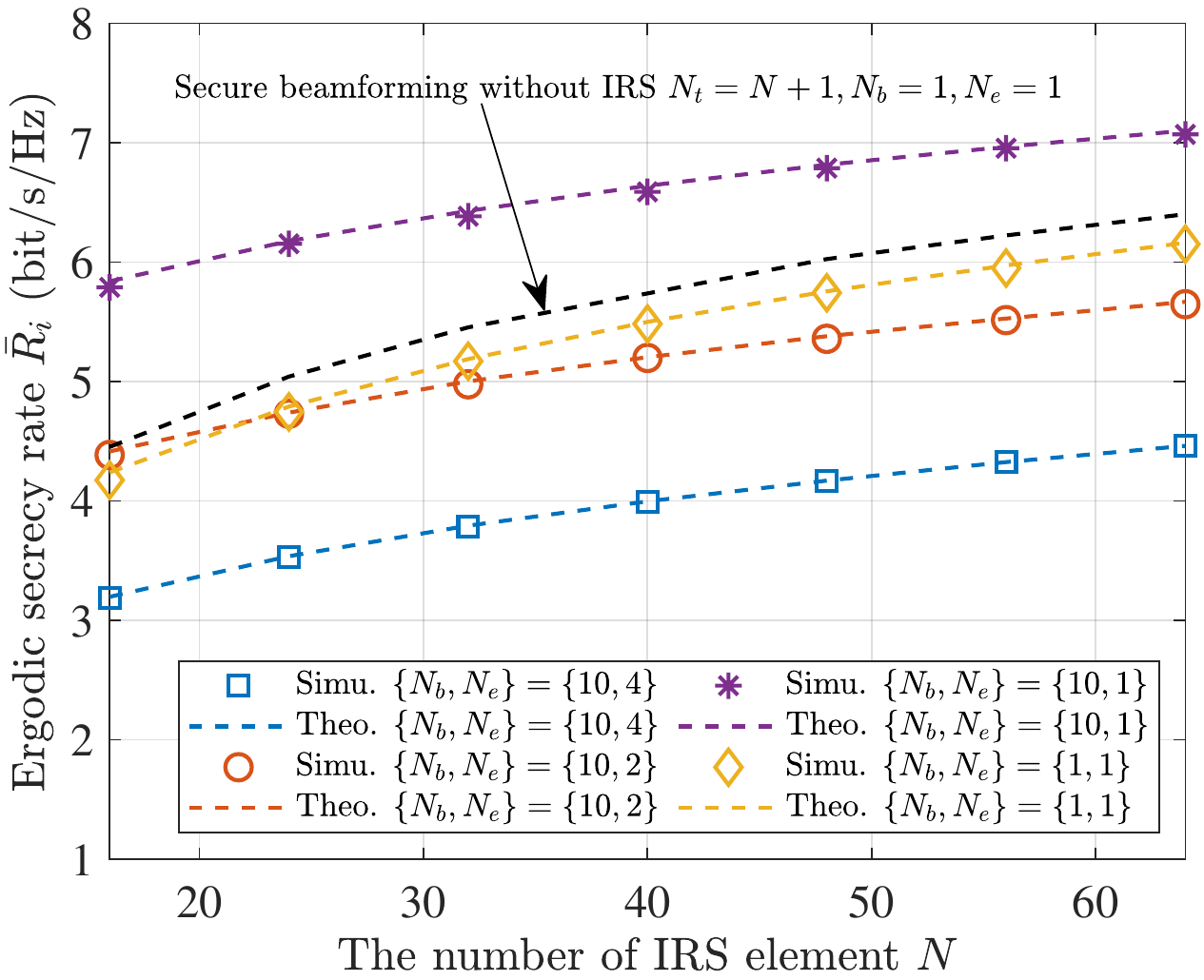}
\caption{Ergodic secrecy rate in terms of the number of IRS elements.}\label{sim1}
\end{subfigure}\hfill
\begin{subfigure}[t]{.5\textwidth}
\centering
\includegraphics[width=0.88\textwidth]{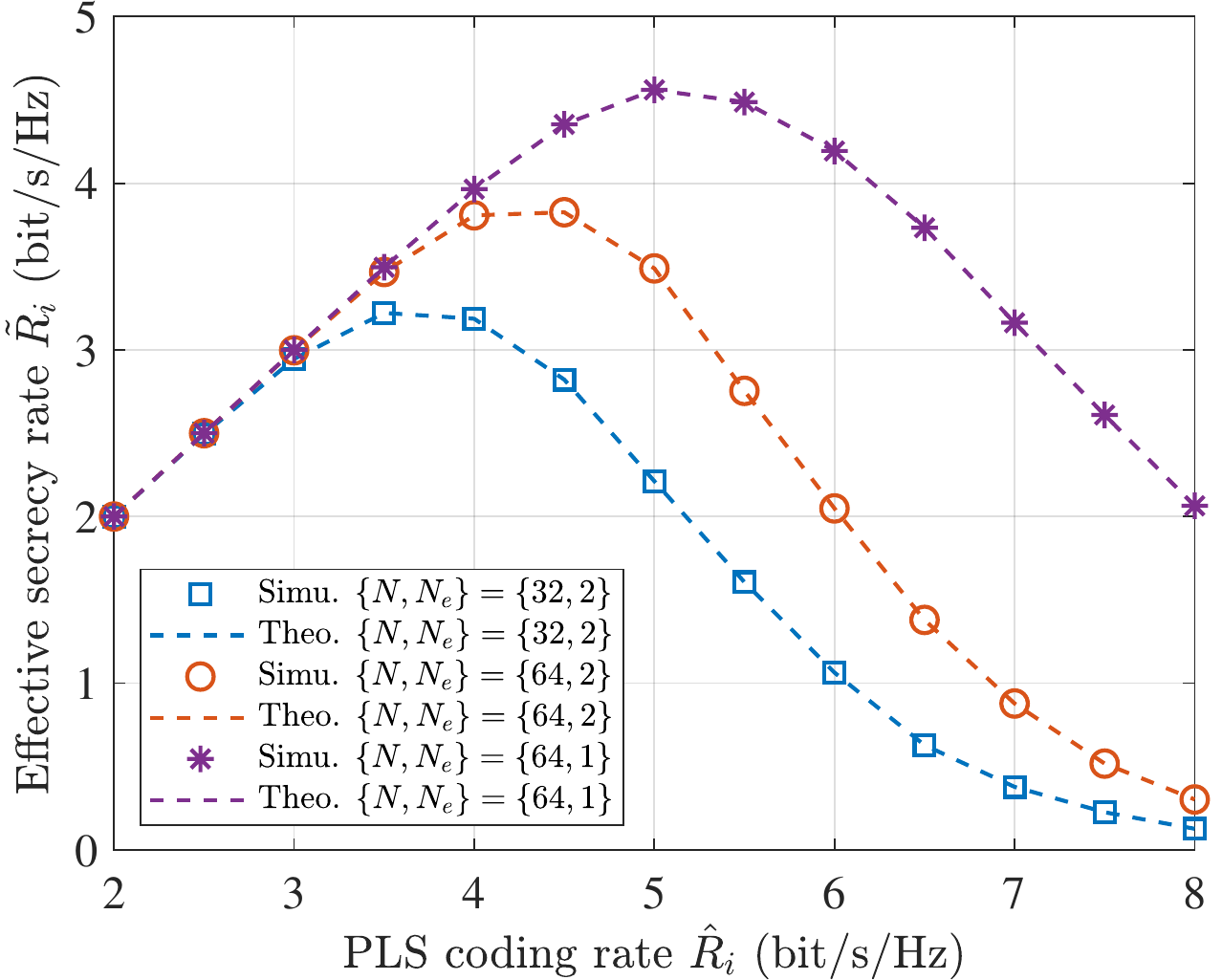}
\caption{Effective secrecy rate in terms of PLS coding rates with $N_b=10$.}\label{sim2}
\end{subfigure}
\caption{Theoretical results and Monte Carlo simulations of the ergodic secrecy rate and effective secrecy rate of $U_i$, where SNR = 63 dB.} \label{theorem}
\end{figure*}

\subsection{Adaptive PLS Coding}
The PLS coding was introduced first by Wyner \cite{Wyner75}, which is a code scheme to ensure that a coding rate (defined as the PLS coding rate) below the secrecy rate can achieve both reliability and information-theoretic security. In the data uploading process, $U_i$ uses the secrecy outage probability to adjust the PLS coding rate because Eve's CSI is unavailable for $U_i$, where the secrecy outage probability is regarded as the probability that the secrecy rate $R_i$ is smaller than the PLS coding rate of $U_i$, defined by $\hat{R}_i$. From \cite[Eq. (38)]{Liu2021}, the secrecy outage probability is given as 
\begin{flalign}\label{uso}
P(\hat{R}_i)&=P(R_i\leq \hat{R}_i \big| \text{decoding correctly}\text{ and }\bm{\Phi}^*_i) \notag \\
&=P(C_{w,i}\geq C_{m,i}- \hat{R}_i \big),
\end{flalign}
where $C_{w,i}$ and $C_{m,i}$ are defined in Eqs. (\ref{cm}) and (\ref{cw}) with $\bm{\Phi}^*_i$, respectively. Since the instantaneous CSI of $\mathbf{l}_i$, $\mathbf{A}_i$, and $\mathbf{h}_i$ are assumed to be estimated perfectly in each transmission burst, $U_i$ can adaptively adapt $C_{m,i}$ such that the codebook can be decoded correctly. The remaining work is to find an optimal $\hat{R}_i$ with the consideration of secrecy outage probability.

We use the effective secrecy rate as a secrecy metric to optimize $\hat{R}_i$, as investigated in \cite{Liu2021}. The effective secrecy rate is an average rate secretly received at BS over many transmission bursts with the pre-defined $\hat{R}_i$, which is expressed as
\begin{flalign}\label{effsr}
\tilde{R}_i=\{1-P(\hat{R}_i)\}\hat{R}_i.
\end{flalign}
In order to maximize $\tilde{R}_i$ by adjusting $\hat{R}_i$, the next work is to deduce the expression of effective secrecy rate, as shown in Theorem 2. 

\begin{theorem}[Effective secrecy rate] 
The expression of effective secrecy rate $\tilde{R}_i$ with the pre-defined $\hat{R}_i$ can be expressed as
\begin{flalign}
\tilde{R}_i(\hat{R}_i)&=\bigg[1-\frac{1}{\Gamma(N_e)}\Gamma\bigg(N_e,\frac{\phi}{1+|\bm{\Phi}_i^*\mathbf{h}_i|^2}\bigg)\bigg]\hat{R}_i,
\end{flalign}
where $\phi= \sigma_{e,i}^2(2^{C_{m,i}-\hat{R}_i}-1)/(\alpha_{e,i}^2P_i)$.
\end{theorem}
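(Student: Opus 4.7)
The plan is to work directly from the definition of the effective secrecy rate in Eq. (\ref{effsr}) and the secrecy outage expression in Eq. (\ref{uso}), and identify the conditional distribution of the wiretap channel gain given the legitimate CSI, which is fixed per transmission burst by assumption. First, I would rewrite the outage event $C_{w,i}\geq C_{m,i}-\hat{R}_i$ by substituting $C_{w,i}$ from Eq.~(\ref{cw}) with $\bm{\Phi}_i^{*}$ plugged in, moving the $\log_2$ inside, and isolating the gain $X:=|\mathbf{g}_i+\mathbf{Z}_i\bm{\Phi}_i^{*}\mathbf{h}_i|^2$. This immediately produces the threshold $\phi=\sigma_{e,i}^{2}(2^{C_{m,i}-\hat{R}_i}-1)/(\alpha_{e,i}^{2}P_i)$ and reduces $P(\hat{R}_i)$ to the tail probability $P(X\geq \phi)$.

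The key step is to determine the distribution of $X$ conditional on $\mathbf{h}_i$ and $\bm{\Phi}_i^{*}$, treating $\mathbf{h}_i$ as deterministic because it is assumed to be perfectly estimated. Since $\bm{\Phi}_i^{*}\mathbf{h}_i\in\mathbb{C}^{N}$ is fixed, and the rows of $\mathbf{Z}_i$ are i.i.d.\ $\mathcal{CN}_{1,N}(\mathbf{0},\mathbf{I}_N)$, I would argue that each component of the $N_e$-vector $\mathbf{Z}_i\bm{\Phi}_i^{*}\mathbf{h}_i$ is an independent scalar $\mathcal{CN}(0,|\bm{\Phi}_i^{*}\mathbf{h}_i|^2)$ random variable. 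Combining with the independent $\mathbf{g}_i\sim\mathcal{CN}_{N_e,1}(\mathbf{0},\mathbf{I}_{N_e})$ gives
\begin{equation*}
\mathbf{g}_i+\mathbf{Z}_i\bm{\Phi}_i^{*}\mathbf{h}_i\;\sim\;\mathcal{CN}_{N_e,1}\!\bigl(\mathbf{0},\,(1+|\bm{\Phi}_i^{*}\mathbf{h}_i|^2)\mathbf{I}_{N_e}\bigr),
\end{equation*}
so that $X$ is a sum of $N_e$ i.i.d.\ scaled exponential magnitudes and hence follows a Gamma distribution with shape $N_e$ and scale $1+|\bm{\Phi}_i^{*}\mathbf{h}_i|^2$.

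Given this Gamma law, I would apply the standard CCDF (analogous to Eq.~(\ref{cdf}) in Lemma~1 specialized to integer shape parameter $N_e$ and scale $1+|\bm{\Phi}_i^{*}\mathbf{h}_i|^2$) to obtain
\begin{equation*}
P(X\geq \phi)=\frac{1}{\Gamma(N_e)}\,\Gamma\!\left(N_e,\frac{\phi}{1+|\bm{\Phi}_i^{*}\mathbf{h}_i|^2}\right).
\end{equation*}
Substituting this into $\tilde{R}_i(\hat{R}_i)=\{1-P(\hat{R}_i)\}\hat{R}_i$ yields exactly the stated expression.

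The only subtle point — and what I expect to be the main obstacle to present cleanly — is the conditional Gaussianity argument: one has to make explicit that the probability in Eq.~(\ref{uso}) is taken over the unknown Eve channels $(\mathbf{g}_i,\mathbf{Z}_i)$ only, with $\mathbf{h}_i$ and $\bm{\Phi}_i^{*}$ fixed, and then verify that $\mathbf{Z}_i\bm{\Phi}_i^{*}\mathbf{h}_i$ inherits a diagonal covariance $|\bm{\Phi}_i^{*}\mathbf{h}_i|^{2}\mathbf{I}_{N_e}$ from the vectorized Kronecker structure $\mathcal{CN}_{N_e,N}(\mathbf{0},\mathbf{I}_{N_e}\otimes\mathbf{I}_N)$ of $\mathbf{Z}_i$. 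Once this identification is in place the remainder is just a one-line application of the incomplete-gamma tail formula, so the bulk of the write-up should focus on this distributional step.
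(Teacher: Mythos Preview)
Your proposal is correct and follows essentially the same approach as the paper: identify the secrecy outage probability $P(\hat{R}_i)$ as the upper-tail probability of a Gamma-distributed wiretap gain (shape $N_e$, scale $1+|\bm{\Phi}_i^{*}\mathbf{h}_i|^{2}$), then substitute into Eq.~(\ref{effsr}). The only difference is that the paper simply cites an external reference for the formula $P(\hat{R}_i)=\frac{1}{\Gamma(N_e)}\Gamma\!\big(N_e,\phi/(1+|\bm{\Phi}_i^{*}\mathbf{h}_i|^{2})\big)$, whereas you supply the conditional-Gaussian argument that derives it; your write-up is therefore more self-contained but not a different route.
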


\begin{proof}
As $P(\hat{R}_i)=1/\Gamma(N_e)\Gamma(N_e,\phi/(1+|\bm{\Phi}_i^*\mathbf{h}_i|^2)$ \cite[Eq. (23)]{Liugit2}, it is easy to get the expression of effective secrecy rate by substituting $P(\hat{R}_i)$  into Eq. (\ref{effsr}).
\end{proof}

According to Theorem 2, the adaptive PLS coding scheme is designed to maximize the effective secrecy rate, i.e.,
\begin{flalign}
(\hat{R}_i^*)=\max_{\hat{R}_i,P(\hat{R}_i)\leq \varepsilon }\tilde{R}_i(\hat{R}_i),
\end{flalign}
where the optimal PLS coding rate $\hat{R}_i$, i.e., $\hat{R}_i^*$ can be found via the Golden-section search over the function $\tilde{R}_i(\hat{R}_i)$ as the function $\tilde{R}_i(\hat{R}_i)$ with respect to $\hat{R}_i$ is unimodal, as examined in Fig. \ref{sim2}. Note that the secrecy outage probability $P(\hat{R}_i)$ should be smaller than a threshold $\varepsilon$ to avoid a large secrecy outage probability in the transmission.

\subsection{Computational Complexity Analysis}

In the phase shift matrix optimization process, the computational complexity to get $\bm{\Phi}_i^*, \forall i$ is $N_I(4N^4+6N^3+N)$. The computational resource allocation requires two bubble sort algorithms to get $\{ U_{1'},U_{2'},...,U_{N_I'}\}$ and $\{ M_{1''},M_{2''},...,M_{N_I''}\}$, each of which needs $N_I^2$ iterations. The KM algorithm needs $N_I^4$ iterations to find the optimal matching $\mathbf{W}^*$ \cite{duan2012scaling}. The computational complexity of the PLS coding scheme based on the Golden-section search algorithm is $O(\log(1/\varepsilon))$, where $\varepsilon$ is the accuracy requirement.  In total, the computational complexity of the offloading process is $O[N_I(4N^4+6N^3+N)+N_I^4+2N_I^2+\log(1/\varepsilon)]$. 

\section{Simulations}\label{simulations}

The global simulation parameters are described as follows. The path loss parameter $\alpha_i$ is calculated by $\alpha_i=\frac{\tau}{2\pi f_c d_i}$, where $\tau$ is the speed of light, $f_c$ is work spectrum that is set to 2.4 GHz, $d_i, \forall i$ is the distance between the BS and $U_i$ is set to be uniform distribution over $[$30 m, 50 m$]$. $f_k, \forall k$, $D_i, \forall i$, and $V_i, \forall i$ are uniform distributions over $[$40 GHz, 80 GHz$]$, $[$610 KB, 1.8 MB$]$, and $[1.5\times 10^6$, $2\times 10^6]$. The AWGN floor parameters $\sigma_i^2$ and $\sigma_{e,i}^2$ are -53 dBm \cite{Dinc2016}. The transmission power $P_i, \forall i$ is 10 dBm. The computation energy efficiency coefficient $\eta_k, \forall k$ is set to be $10^{-27}$. Note that all simulation results are average values from $10^5$ independent runs. The available radio resource block (RB) has 150 KHz bandwidth. If two RBs are used for transmission, we have $B=300$ KHz.

Here, we examine Theorems 1 and 2 in Figs. \ref{sim1} and~\ref{sim2}, respectively. These figures show the good agreements between theoretical results (Theo.) and Monte Carlo (MC.) simulation results from $10^5$ independent runs. From Fig. \ref{sim1}, we can find that the ergodic secrecy rate increases with the increasing number of IRS elements. Also, a comparison is taken between IRS-assisted PLS and the secure beamforming scheme that uses $N_t=N+1$ transmission antennas in the case of $N_b=1$ and $N_e=1$ \cite{Liu2020}. From the comparison simulations, we find that the IRS-assisted PLS outperforms the secure beamforming scheme because the channel gain by IRS is larger than that of the beamforming scheme \cite{VanChien2021}. Fig. \ref{sim1} also shows that the number of Eve's antennas has a strong negative effect on the ergodic secrecy rate. Fig. \ref{sim2} shows the effective secrecy rate in terms of PLS coding rate. It is obvious that there is one peak in each curve, which is consistent with the results shown in \cite{Liu2021}, meaning that the unimodal function-aimed search algorithms, such as golden-section search, can be used to find the optimal PLS coding rate to maximize the effective secrecy rate.

\begin{figure}[t]
\centering
\includegraphics[width=0.90\linewidth]{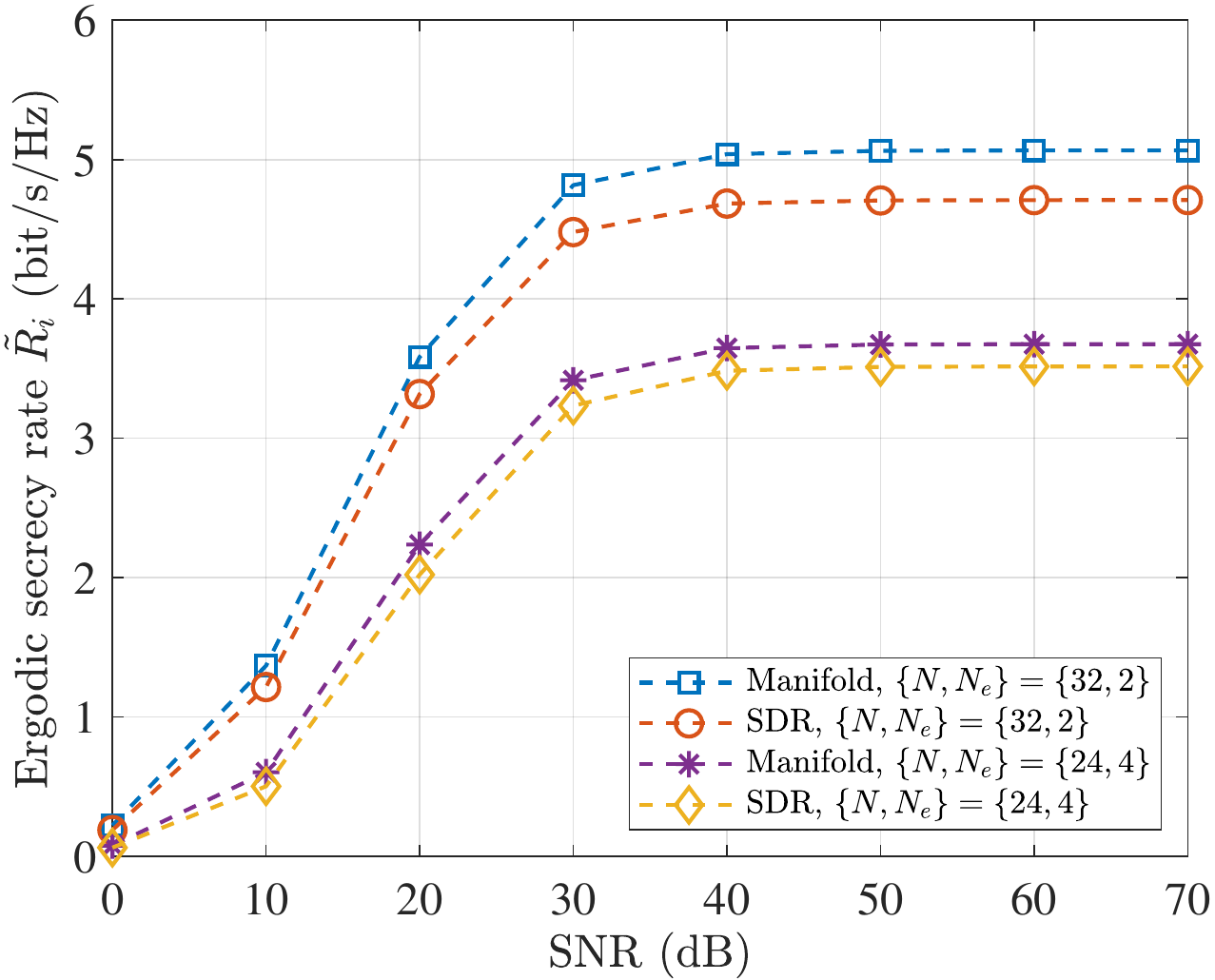}
\caption{Comparison simulations between the SDR-based scheme and the manifold optimization scheme.}
\label{sim4}
\end{figure}

It is hard to obtain the global optimal result of $\bm{\Phi}_i$ in P2 via globally searching algorithms as $\bm{\Phi}_i$ is in an $N$-continuous dimensional space. Hence, the common way to solve P2 is based on SDR or manifold optimization, such as \cite{Wu2019} and~\cite{Yuiccc2019}. The comparison simulations between the SDR-based scheme and the manifold optimization scheme are shown in Fig.~\ref{sim4}. From this figure, we can find the manifold optimization has better performance than that in the SDR-based scheme, because the rank one constraint results in that the SDR-based scheme just provides a near-optimal result for $\bm{\Phi}_i$ while the manifold optimization scheme can give the local optimal result~\cite{Wu2019}.

\begin{figure}[t]
\centering
\includegraphics[width=0.90\linewidth]{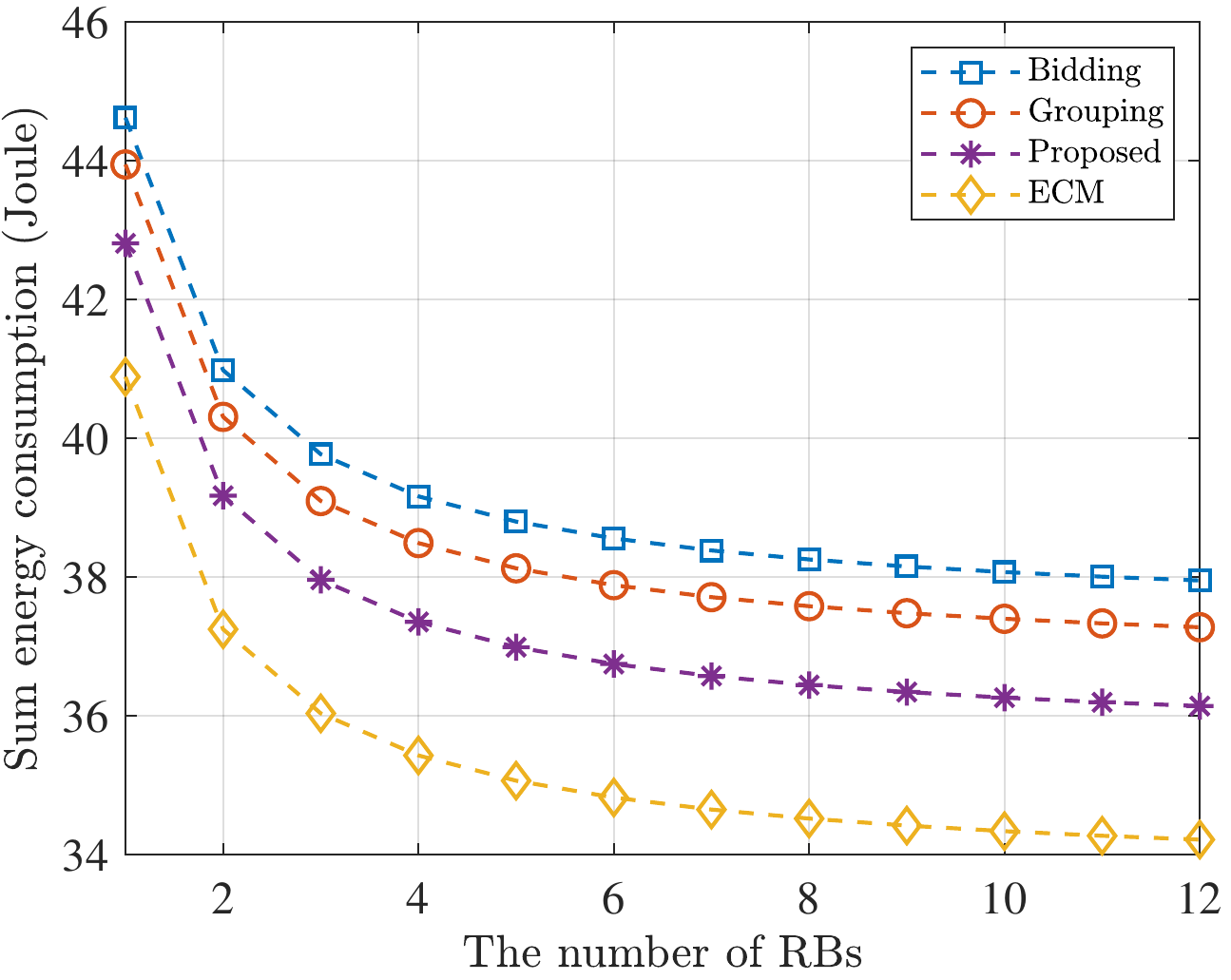}
\caption{Sum energy consumption of 40 sensors in terms of the number of RBs, where  $N=40$, $c_k=10,\forall k$, $\varepsilon = 0.1$, and $\epsilon=8$.}
\label{comp1}
\end{figure}

\begin{figure}[htb]
\centering
\includegraphics[width=0.90\linewidth]{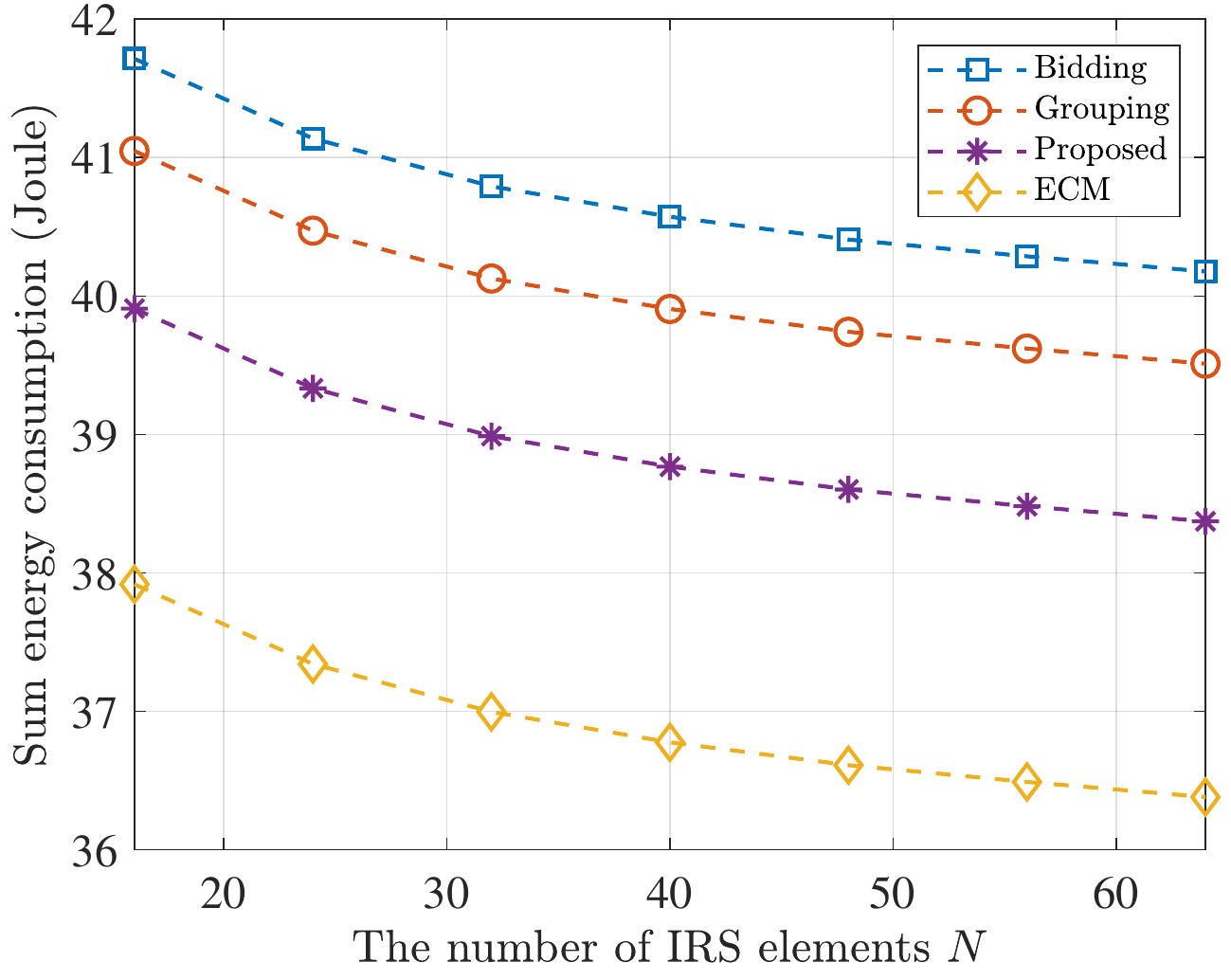}
\caption{Sum energy consumption of 40 sensors in terms of the number of IRS elements, where the number of RBs is 2, $c_k=10,\forall k$, $\varepsilon = 0.1$, and $\epsilon=8$.}
\label{comp2}
\end{figure}

\begin{figure}[htb]
\centering
\includegraphics[width=0.90\linewidth]{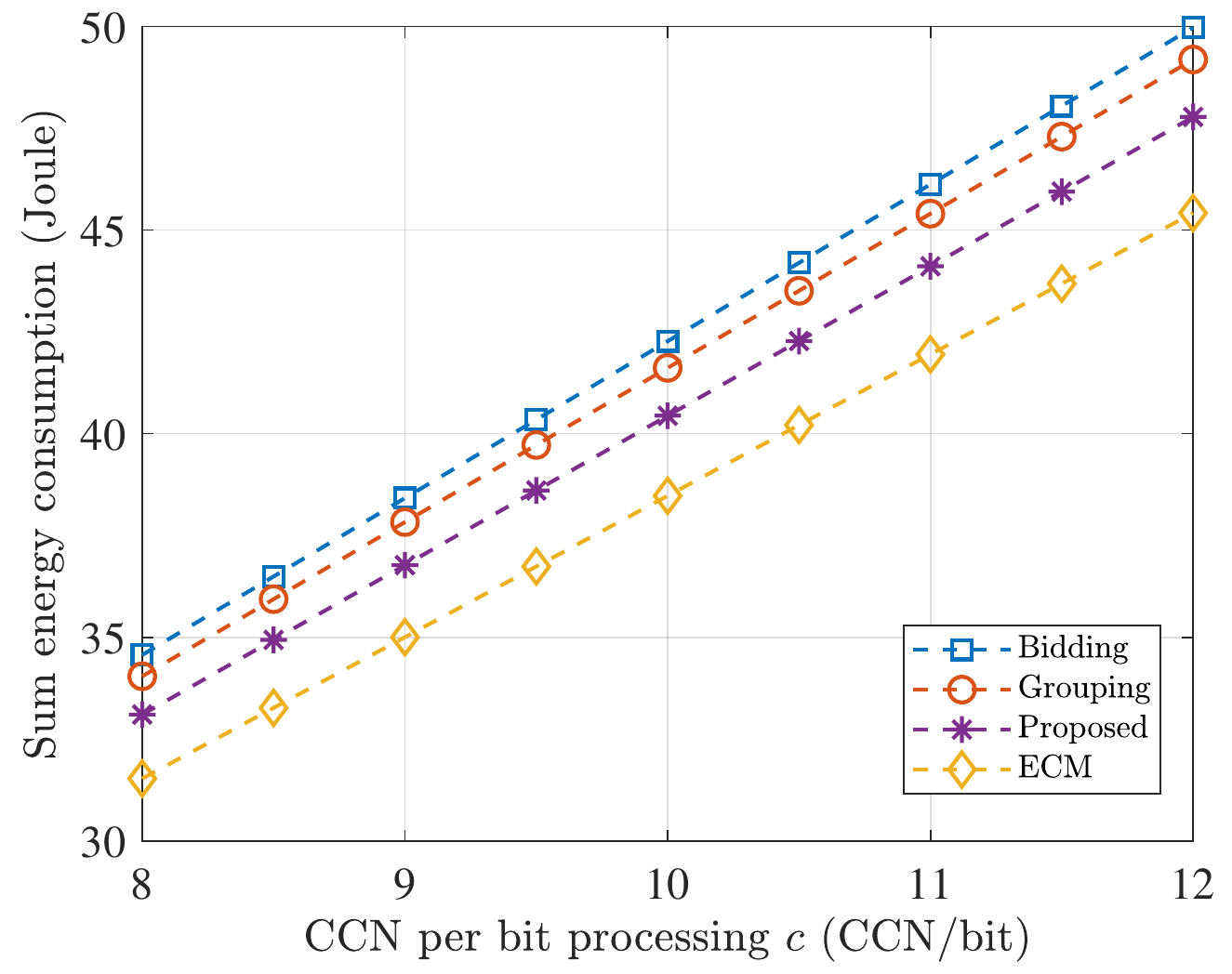}
\caption{Sum energy consumption of 40 sensors in terms of CCN per bit processing ($c=c_k,\forall k$), where the number of RBs is 2, $N=40$, $\varepsilon = 0.1$, and $\epsilon=8$.}
\label{comp3}
\end{figure}

\begin{figure}[htb]
\centering
\includegraphics[width=0.90\linewidth]{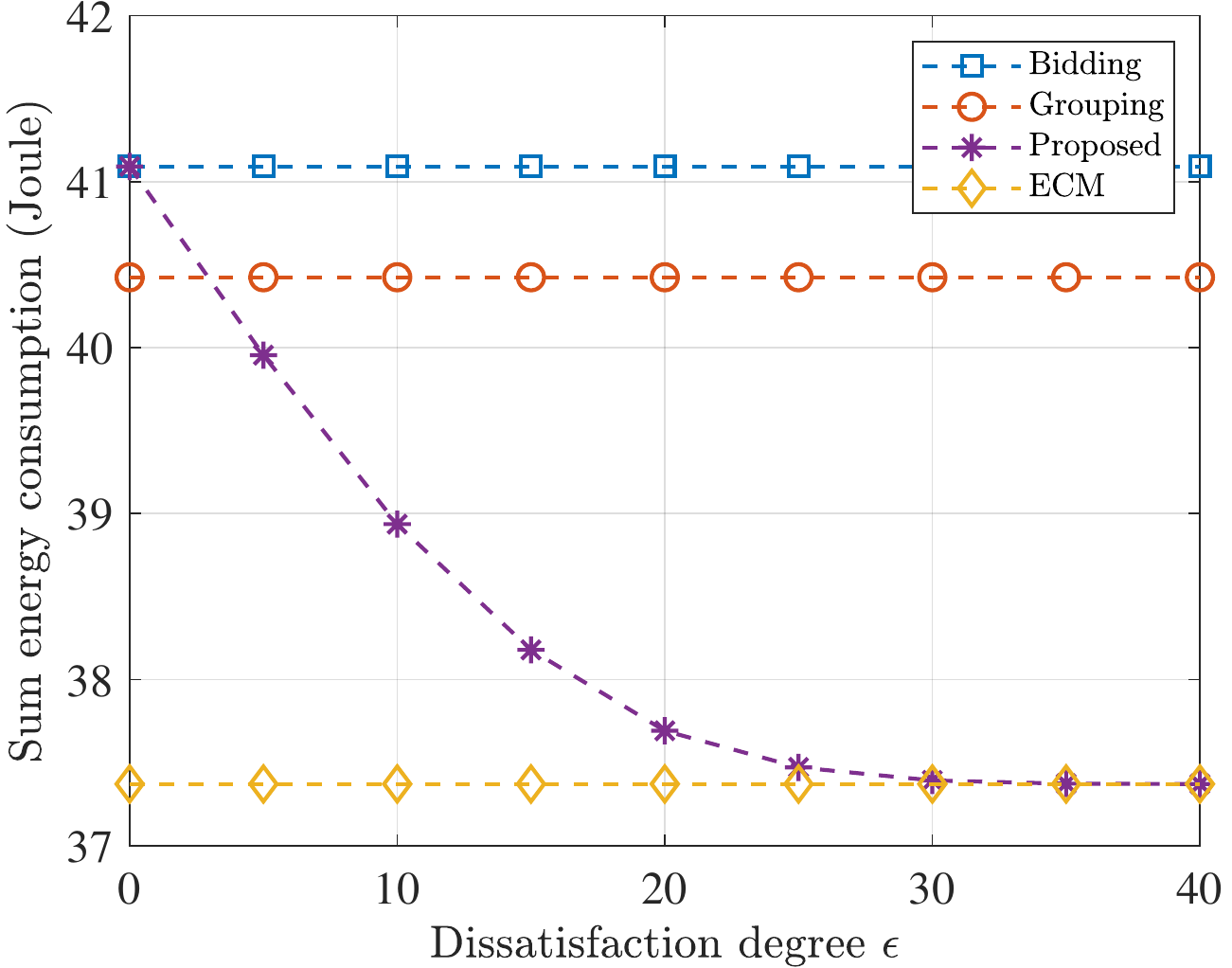}
\caption{Sum energy consumption of 40 sensors in terms of the dissatisfaction degree, where the number of RBs is 2, $N=40$, $c_k=10,\forall k$, and $\varepsilon = 0.1$.}
\label{comp4}
\end{figure}

The simulations of computational resource allocation are executed in terms of the number of RBs, the number of IRS elements $N$, CNN per bit processing $c_k$, and dissatisfaction degree $\epsilon$. The comparison simulations have been done among four different allocation schemes, i.e., bidding, grouping, energy consumption minimization (ECM), and the proposed scheme, which are described detailedly as follows.

\begin{enumerate}
\item Bidding scheme: It is a traditional strategy that the sensor who pays the highest Gas will get the best computational resource among the MEC servers. This scheme  requires two bubble sort algorithms and needs $O[N_I(4N^4+6N^3+N)+2N_I^2+\ln(1/\varepsilon)]$ computational complexity.
\item Group scheme. It is proposed in \cite{Liugit1} where sensors are divided into different groups based on paid Gas, and the group with higher Gas is prioritized with better computational resources. The computational complexity of this scheme is $O[N_I(4N^4+6N^3+N)+2N_I^2+N_G(N_m+1)^4+\ln(1/\varepsilon)]$, where $N_G$ and $N_m$ are the numbers of groups and sensors in each group, respectively.
\item ECM scheme: In the ECM scheme, the bipartite graph is generated to record the $Q_{i,k}$ for all sensor-MEC combinations, then, the KM algorithm is used to find the optimal matching to minimize the  energy consumption, as used in  \cite{Mu2020,Han2021iot,Dai2018}. The computational complexity of this scheme is $O[N_I(4N^4+6N^3+N)+2N_I^2+N_I^4+\log(1/\varepsilon)]$.
\item Proposed scheme: This paper uses joint phase shift matrix optimization and Gas-oriented computation offloading to reduce the energy consumption while keeping the dissatisfaction degrees of sensors are below a threshold. The computational complexity is similar to the ECM scheme, as discussed in Section V. D.
\end{enumerate}

From Fig. \ref{comp1}, we can check the number of RBs on the sum energy consumption of 40 sensors, where $\epsilon=8$, $N=40$, and $c=10$. It is demonstrated that the sum energy consumption decreases with the increasing number of RBs, because data transmission time is reduced with more bandwidth. Then, we find the increasing trend becomes slow when the bandwidth is enough as the global performance should consider other factors, such as computing power. Also, the proposed scheme outperforms the bidding and group schemes, but has more energy consumption than the ECM scheme. 

It is demonstrated that the increasing number of IRS elements will reduce the sum energy consumption, as shown in Fig. \ref{comp2}. However, we find the decline is slow because the computation offloading system follows the ``cask principle'', meaning that unilaterally improving the capability of communication modules cannot give a large gain for the performance of computation offloading systems. Fig. \ref{comp3} shows the sum energy consumption in terms of CCN per bit processing. There exists a rough linear rise for sum energy consumption with the CCN per bit processing. As the CCN per bit processing is the main metric for the algorithm efficiency of CPU chips, it implies that the best to reduce energy consumption is the improvement of the efficiency of CPU designs.

Fig. \ref{comp4} is used to check the dissatisfaction degree of sensors on the sum energy consumption. It is obvious that if the sensor has the highest dissatisfaction, the proposed scheme has the same performance as that in the ECM scheme, meaning that the optimization objective of the proposed scheme is to minimize the sum energy consumption and the satisfaction of sensors is no longer considered. Also, if $\epsilon=0$, the sensor gets the computational resource whose ranking should be equivalent to the ranking of the paid Gas among all sensors. Hence, the performance of $\epsilon=0$ is the same as the bidding scheme. Dissatisfaction degrees are not considered in bidding, grouping, and ECM schemes, so the simulation lines keep constant in this figure.

\section{Conclusions}\label{conclusions}
In this article, we establish a blockchain-empowered computation offloading system in an IoT network, where the data is protected by IRS-assisted PLS methods. Especially, we deduce the expression of the ergodic secrecy rate of the scheme for the computational resource allocation. Also, we design a Gas-oriented energy consumption minimization algorithm where higher Gas providers are prioritized with better computational resources, while improving the effective secrecy rate in the data uploading process. Simulations show that the proposed scheme reduces the energy consumption while guaranteeing that the node paying higher Gas has more opportunities to get a stronger computational resource. In future works, we will integrate the promising space division multiple access technologies into this system to further reduce the latency of the computation offloading process.

\newcounter{mytempeqncnt}
\begin{figure*}[t]
\begin{flalign}\label{meiG}
\int_{0}^{\infty}z^{\rho-1}(z+\nu)^{-\eta}G_{pq}^{mn}\bigg( \mu z \bigg| \begin{matrix}a_1, \dots, a_p \\
b_1, \dots, b_q 
\end{matrix} \bigg)dz=\frac{\nu^{\rho-\eta}}{\Gamma(\eta)}G_{p+1,q+1}^{m+1,n+1}\bigg( \mu \nu \bigg| \begin{matrix}1-\rho,a_1, \dots, a_p \\
\eta-\rho,b_1, \dots, b_q 
\end{matrix} \bigg)
\end{flalign}
\hrulefill
\vspace*{4pt}
\end{figure*}

\section*{Appendix}

\subsection{Proof of Lemma 1}\label{pol3}
Lemma \ref{nchis} is used to prove Lemma 1.
\begin{lemma}\label{nchis}[Proved in \cite{Patnaik1949}]
For $n$ independent Gaussian random variables $X_i\sim \mathcal{N}(\mu_i,\sigma^2)$, $Y=\sum_{i=1}^{n}X_i^2/\sigma^2$ obeys the noncentral chi-square distribution with degree of freedom $n$, and we have the mean and variance of $Y$ as follows,
\begin{flalign}\label{Le1}
\mathbb{E}(Y)=n+\lambda,\quad \mathbb{V}(Y)=2n+4\lambda,
\end{flalign}
where $\lambda=\sum_{i=1}^n\mu_i^2/\sigma^2$.
\end{lemma}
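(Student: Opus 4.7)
The plan is to standardize the Gaussian variables, identify the resulting sum as a noncentral chi-square, and then compute the first two moments term-by-term using independence and known Gaussian moment identities.

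First, I would set $Z_i=X_i/\sigma$, so that $Z_i\sim\mathcal{N}(m_i,1)$ with $m_i=\mu_i/\sigma$, and $Y=\sum_{i=1}^{n}Z_i^2$. By the standard definition of the noncentral chi-square distribution, $Y$ is noncentral chi-square with $n$ degrees of freedom and noncentrality parameter $\lambda=\sum_{i=1}^{n}m_i^2=\sum_{i=1}^{n}\mu_i^2/\sigma^2$, which matches the $\lambda$ in the statement. This identification dispatches the distributional claim without needing to compute the density.

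For the mean, I would use $\mathbb{E}(Z_i^2)=\mathbb{V}(Z_i)+[\mathbb{E}(Z_i)]^2=1+m_i^2$, then sum over $i$ to obtain $\mathbb{E}(Y)=\sum_{i=1}^{n}(1+m_i^2)=n+\lambda$. For the variance, independence of the $Z_i$ (inherited from independence of the $X_i$) gives $\mathbb{V}(Y)=\sum_{i=1}^{n}\mathbb{V}(Z_i^2)$. Computing each term requires the fourth moment $\mathbb{E}(Z_i^4)=m_i^4+6m_i^2+3$ (obtainable from the moment generating function of $\mathcal{N}(m_i,1)$, or from the recursion for central moments around $m_i$), so
\begin{flalign*}
\mathbb{V}(Z_i^2)=\mathbb{E}(Z_i^4)-[\mathbb{E}(Z_i^2)]^2=(m_i^4+6m_i^2+3)-(1+m_i^2)^2=2+4m_i^2.
\end{flalign*}
Summing yields $\mathbb{V}(Y)=2n+4\lambda$, completing the result.

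The main obstacle, though minor, is simply justifying $\mathbb{E}(Z_i^4)=m_i^4+6m_i^2+3$ cleanly; this can be handled by writing $Z_i=m_i+W_i$ with $W_i\sim\mathcal{N}(0,1)$ and expanding $(m_i+W_i)^4$ using the well-known central moments $\mathbb{E}(W_i)=\mathbb{E}(W_i^3)=0$, $\mathbb{E}(W_i^2)=1$, and $\mathbb{E}(W_i^4)=3$. Since Patnaik's 1949 result is already cited, the proof can be presented concisely as above, emphasizing the standardization step and the fourth-moment computation rather than re-deriving the full noncentral chi-square density.
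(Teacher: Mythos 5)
Your proof is correct, and every algebraic step checks out: with $Z_i=m_i+W_i$ and $W_i\sim\mathcal{N}(0,1)$ you get $\mathbb{E}(Z_i^4)=m_i^4+6m_i^2+3$, hence $\mathbb{V}(Z_i^2)=(m_i^4+6m_i^2+3)-(1+m_i^2)^2=2+4m_i^2$, and summing over $i$ by independence gives $\mathbb{V}(Y)=2n+4\lambda$ as claimed. Note that the paper itself supplies no proof of this lemma at all; it simply defers to the cited reference (Patnaik, 1949), so there is no in-paper argument to compare yours against. Your route --- standardize to unit variance, recognize the sum of squares as noncentral chi-square by definition, then compute the first two moments term-by-term from Gaussian moment identities --- is the standard elementary derivation, and it has the advantage of being self-contained: it establishes both the distributional identification and the moment formulas without invoking the noncentral chi-square density or its moment generating function. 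The only thing you leave implicit is that functions of independent random variables remain independent (so that $\mathbb{V}(Y)=\sum_i\mathbb{V}(Z_i^2)$), which is immediate. No gaps.
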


At first, we calculate the mean of $X\sim X(N_e,N,\bm{\Phi})$ as follows,
\begin{flalign}
\mathbb{E}(x)&=\mathbb{E}(|\mathbf{g}|^2)+\mathbb{E}(|\mathbf{Z}\bm{\Phi}\mathbf{h}|^2).
\end{flalign}
It is obvious that $\mathbb{E}(|\mathbf{g}|^2)=N_e$. 

Since $\mathbf{\Phi}$ is independent with $\mathbf{g}$, $\mathbf{Z}$, and $\mathbf{h}$, we introduce two auxiliary random variables $\mathbf{u}_1=\bm{\Phi}\mathbf{h}$ and $\mathbf{u}_2=\mathbf{Z}\bm{\Phi}\mathbf{h}/|\bm{\Phi}\mathbf{h}|$ with $\mathbf{u}_1\sim\mathcal{CN}_{N,1}(\mathbf{0},\mathbf{I}_{N})$ and $\mathbf{u}_2\sim\mathcal{CN}_{N_e,1}(\mathbf{0},\mathbf{I}_{N_e})$, respectively. It is obvious that $\mathbf{u}_1$ and $\mathbf{u}_2$ are independent, thus, $\mathbb{E}(|\mathbf{Z}\bm{\Phi}\mathbf{h}|^2)$ can be transformed to
\begin{flalign}
\mathbb{E}(|\mathbf{Z}\bm{\Phi}\mathbf{h}|^2)&=\mathbb{E}\bigg(\bigg|\frac{\mathbf{Z}\bm{\Phi}\mathbf{h}}{|\bm{\Phi}\mathbf{h}|}\bigg|^2\bigg)\mathbb{E}(|\bm{\Phi}\mathbf{h}|^2) \overset{a}{=}N_eN,
\end{flalign}
where $\overset{a}{=}$ is achieved because $\mathbb{E}(|\frac{\mathbf{Z}\bm{\Phi}\mathbf{h}}{|\bm{\Phi}\mathbf{h}|}|^2)=N_e$ and $\mathbb{E}(|\bm{\Phi}\mathbf{h}|^2)=N$. Hence, the mean of $x$ can be re-written as 
\begin{flalign}
\mathbb{E}(x)&=N_e(1+N).
\end{flalign}
Then, we will deduce the variance of $x$, i.e., $\mathbb{V}(x)$, which is given as 
\begin{flalign}\label{var}
\mathbb{V}(x)=\mathbb{E}(|x|^2)-|\mathbb{E}(x)|^2,
\end{flalign}
where $\mathbb{E}(|x|^2)$ can be expressed as
\begin{flalign}
&\mathbb{E}(x^2)  \notag \\
&=\mathbb{E}(\big||\mathbf{g}+\mathbf{Z}\bm{\Phi}\mathbf{h}|^2\big|^2) \notag \\ 
&=\mathbb{E}(\big||\mathbf{g}|^2+|\mathbf{Z}\bm{\Phi}\mathbf{h}|^2+  \mathbf{h}^{\rm{H}}\bm{\Phi}^{\rm{H}}\mathbf{Z}^{\rm{H}}\mathbf{g}+\mathbf{g}^{\rm{H}} \mathbf{Z}\bm{\Phi}\mathbf{h}\big|^2) \notag \\ 
&=\mathbb{E}(\big||\mathbf{g}|^2\big|^2)+\mathbb{E}(\big||\mathbf{Z}\bm{\Phi}\mathbf{h}|^2\big|^2)+\mathbb{E}(|\mathbf{h}^{\rm{H}}\bm{\Phi}^{\rm{H}}\mathbf{Z}^{\rm{H}}\mathbf{g} |^2) \notag \\
&+\mathbb{E}(|\mathbf{g}^{\rm{H}} \mathbf{Z}\bm{\Phi}\mathbf{h}|^2)+2\mathbb{E}(| \mathbf{g}|^2|\mathbf{Z}\bm{\Phi}\mathbf{h}|^2).
\end{flalign}
According to the property of noncentral chi-square distribution defined in Lemma 4, we have
\begin{flalign}
\mathbb{E}(\big||\mathbf{g}|^2\big|^2)&=\mathbb{V}(|\mathbf{g}|^2)+|\mathbb{E}(|\mathbf{g}|^2)|^2\notag \\
&=N_e^2+N_e. 
\end{flalign}
Also, we can get $\mathbb{E}(| \mathbf{g}|^2|\mathbf{Z}\bm{\Phi}\mathbf{h}|^2)=N_e^2N$, $\mathbb{E}(| \mathbf{h}^{\rm{H}}\bm{\Phi}^{\rm{H}}\mathbf{Z}^{\rm{H}}\mathbf{g}|^2)=\mathbb{E}(| \mathbf{g}^{\rm{H}} \mathbf{Z}\bm{\Phi}\mathbf{h}|^2)=N_eN$, and 
\begin{flalign}
\mathbb{E}(|\mathbf{Z}\bm{\Phi}\mathbf{h}|^4)&=\mathbb{E}\bigg(\bigg|\frac{\mathbf{Z}\bm{\Phi}\mathbf{h}}{|\bm{\Phi}\mathbf{h}|}\bigg|^4\bigg)\mathbb{E}(|\bm{\Phi}\mathbf{h}|^4), \notag \\
&=(N_e+N_e^2)(1+N)N.
\end{flalign}
Then, according to Eq. (\ref{var}), we have the variance of $x$ as follows,
\begin{flalign}
\mathbb{V}(x)=N_e(1+N)^2+(N_e+N_e^2)N.
\end{flalign}
According to the definition of the Gamma distribution \cite{weisstein2002crc}, the shape and scale of the Gamma distribution can be expressed as
\begin{flalign}
&\mu=\frac{[\mathbb{E}(x)]^2}{\mathbb{V}(x)}, \quad \nu=\frac{\mathbb{V}(x)}{\mathbb{E}(x)},
\end{flalign}
then, we get the PDF and CDF of $X\sim X(N_e,N,\bm{\Phi})$ as defined in Eqs. (\ref{pdf}) and (\ref{cdf}).

The proof is completed. \hfill $\IEEEQEDclosed$

\subsection{Proof of Theorem 2}
Here, we deduce the ergodic capacity between $U_i$ and Eve $\mathbb{E}(C_{w,i}|\bm{\Phi}^*_i)$ as follows. Since $\mathbf{\Phi}_i^*$ is related with CSIs of legitimate users and is independent with $\mathbf{g}_i$, $\mathbf{Z}_i$, and $\mathbf{h}_i$, $\mathbb{E}(C_{w,i}|\bm{\Phi}^*_i)$ can be expressed as
\begin{flalign}\label{t2a1}
\mathbb{E}(C_{w,i}|\bm{\Phi}^*_i)&=\int_0^{\infty}\log_2\bigg(1+\frac{\alpha_{e,i}^2P_i}{\sigma_{e,i}^2}x\bigg)f_{X}(x)dx \notag \\
&=\frac{1}{\ln(2)}\int_0^{\infty}\ln\bigg(1+\frac{\alpha_{e,i}^2P_i}{\sigma_{e,i}^2}x\bigg)f_{X}(x)dx \notag \\
&\overset{a}{=}\frac{1}{\ln(2)}\int_0^{\infty}\frac{1}{1+z}\bigg[1-F_{X}\bigg(\frac{\sigma_{e,i}^2z}{P_i\alpha_{e,i}^2}\bigg)\bigg]dz \notag \\
&=\frac{1}{\ln(2)\Gamma(\mu)}\int_0^{\infty}\frac{\Gamma(\mu,\frac{\sigma_{e,i}^2z}{\nu P_i\alpha_{e,i}^2})}{1+z}dz,
\end{flalign}
where $\overset{a}{=}$ is due to a change of variable $z=\frac{\alpha_{e,i}^2P_i}{\sigma_{e,i}^2}x$. $F_X(\cdot)$ is CDF of the variable $X\sim X(N_e,N,\bm{\Phi})$, as defined in Eq. (\ref{cdf}). $\mu$ and $\nu$ are defined in Eqs. (\ref{mu}) and (\ref{nu}).
According to~\cite[Eq. (5)]{Kumar2015}, we have
\begin{flalign}
\Gamma(a,x)=G_{1,2}^{2,0}\bigg( x \bigg| \begin{matrix} 1 \\
0, a
\end{matrix} \bigg).
\end{flalign}
Jointly considering Eq. (\ref{meiG}) \cite[7.812 EI II 418(4)]{Jeffrey2007}, we get
\begin{flalign}\label{t2a2}
&\int_0^{\infty}\frac{\Gamma(\mu,\frac{\sigma_{e,i}^2z}{\nu \alpha_{e,i}^2 P_i})}{1+z}dz \notag \\
&=\int_0^{\infty}(1+z)^{-1}G_{1,2}^{2,0}\bigg( \frac{\sigma_{e,i}^2z}{\nu P_i \alpha_{e,i}^2} \bigg| \begin{matrix} 1 \notag \\
0, \mu
\end{matrix} \bigg)dz \notag \\
&=G_{2,3}^{3,1}\bigg( \frac{\sigma_{e,i}^2}{\nu P_i \alpha_{e,i}^2}\bigg| \begin{matrix}0, 1 \\
0,0, \mu
\end{matrix} \bigg).
\end{flalign}
Substituting Eq. (\ref{t2a2}) into Eq. (\ref{t2a1}), we get the expression of the ergodic capacity of between $U_i$ and BS as follows,
\begin{flalign}\label{t2a3}
&\mathbb{E}(C_{w,i}|\bm{\Phi}^*_i)=\frac{1}{\ln(2)\Gamma(\mu)}G_{2,3}^{3,1}\bigg( \frac{\sigma_{e,i}^2}{\nu P_i \alpha_{e,i}^2 }\bigg| \begin{matrix}0, 1 \\
0,0, \mu
\end{matrix} \bigg).
\end{flalign}

Substituting Eq. (\ref{t2a3}) into Eq.(\ref{eesr}), we obtain the expression of the ergodic secrecy rate.

The proof is completed. \hfill $\IEEEQEDclosed$

\bibliographystyle{IEEEtran}
\bibliography{references}

\end{document}